\newtheorem{theorem}{Theorem}
\newtheorem{lemma}{Lemma}
\newtheorem{observation}{Observation}
\newtheorem{corollary}{Corollary}
\theoremstyle{definition}
\newtheorem{problem}{Problem}
\theoremstyle{remark}
\newtheorem{remark}{Remark}
\newcommand{\rle}[1]{\mathit{RLE}_{#1}}
\newcommand{\rlebp}[1]{\mathit{Rb}[#1]}
\newcommand{\rleep}[1]{\mathit{Re}[#1]}
\newcommand{\rletree}{\mathsf{e^2rtre^2}}
\newcommand{\rleSigma}{\Sigma_{\mathit{RLE}}}
\newcommand{\rlesigma}{\sigma_{\mathit{RLE}}}
\newcommand{\eertree}{\mathsf{eertree}}
\newcommand{\emax}{\mathsf{emax}}
\newcommand{\esecond}{\mathsf{esec}}
\newcommand{\nil}{\mathsf{nil}}
\newcommand{\MaxPal}{\mathbf{MaxPal}}
\newcommand{\RBSP}{\mathbf{RBSP}}
\newcommand{\CPal}{\mathit{CPal}}
\newcommand{\sups}{\mathit{SUPS}}
\newcommand{\supss}{\mathit{SUPS}\mbox{s}}
\newcommand{\mupss}{\mathit{MUPS}\mbox{s}}
\newcommand{\M}{\mathcal{M}}
\newcommand{\mupsbeg}{\mathit{M_{beg}}}
\newcommand{\mupsend}{\mathit{M_{end}}}
\newcommand{\mupslen}{\mathit{M_{len}}}
\newcommand{\rmq}[3]{\mathsf{RmQ}_{#1}(#2, #3)}
\newcommand{\Pred}[2]{\mathsf{Pred}_{#1}(#2)}
\newcommand{\Succ}[2]{\mathsf{Succ}_{#1}(#2)}
\newcommand{\num}{\mathit{num}}
\newcommand{\mrb}{\mathit{mrb}}
\newcommand{\mre}{\mathit{mre}}
\newcommand{\rev}[1]{{#1}^{\mathit{R}}}
\newcommand{\occ}{\mathit{occ}}
\title{
  Fast Algorithms for the Shortest Unique Palindromic Substring Problem on Run-Length Encoded Strings
}
\author{
  {Kiichi Watanabe} \\
  Department of Informatics, Kyushu University, Japan \\
  \texttt{kiichi.watanabe@inf.kyushu-u.ac.jp}  
  \and
  {Yuto Nakashima} \\
  Department of Informatics, Kyushu University, Japan \\
  \texttt{yuto.nakashima@inf.kyushu-u.ac.jp}    
  \and
  {Shunsuke Inenaga} \\
  Department of Informatics, Kyushu University, Japan \\
  PRESTO, Japan Science and Technology Agency, Japan \\ 
  \texttt{inenaga@inf.kyushu-u.ac.jp}      
  \and 
  {Hideo Bannai} \\
  Department of Informatics, Kyushu University, Japan \\
  \texttt{bannai@inf.kyushu-u.ac.jp}
  \and 
  {Masayuki Takeda} \\
  Department of Informatics, Kyushu University, Japan \\
  \texttt{takeda@inf.kyushu-u.ac.jp}
}
\begin{document}

\maketitle

\begin{abstract}
	For a string $S$, 
	a palindromic substring $S[i..j]$ is said to be a \emph{shortest unique palindromic substring} ($\mathit{SUPS}$) 
	for an interval $[s, t]$ in $S$,
	if $S[i..j]$ occurs exactly once in $S$, the interval $[i, j]$ contains $[s, t]$, 
	and	every palindromic substring containing $[s, t]$ which is shorter than $S[i..j]$ 
	occurs at least twice in $S$.
  In this paper, we study the problem of answering $\mathit{SUPS}$ queries on run-length encoded strings.
  We show how to preprocess a given run-length encoded string $\rle{S}$ of size $m$ 
  in $O(m)$ space and $O(m \log \sigma_{\rle{S}} + m \sqrt{\log m / \log\log m})$ time 
  so that all $\mathit{SUPSs}$ for any subsequent query interval can be answered 
  in $O(\sqrt{\log m / \log\log m} + \alpha)$ time, where $\alpha$ is the number of outputs, 
  and $\sigma_{\rle{S}}$ is the number of distinct runs of $\rle{S}$.
  Additionaly, we consider a variant of the SUPS problem where
  a query interval is also given in a run-length encoded form.
  For this variant of the problem,
  we present two alternative algorithms with faster queries.
  The first one answers queries in $O(\sqrt{\log\log m /\log\log\log m} + \alpha)$ time
  and can be built in $O(m \log \sigma_{\rle{S}} + m \sqrt{\log m / \log\log m})$ time,
  and the second one answers queries in $O(\log \log m + \alpha)$ time
  and can be built in $O(m \log \sigma_{\rle{S}})$ time.
  Both of these data structures require $O(m)$ space.
\end{abstract}

%%% Input Sources %%%
\section{Introduction}

The \emph{shortest unique substring} (\emph{SUS}) problem,
which is formalized below, is a recent trend in the string processing community.
Consider a string $S$ of length $n$.
A substring $X = S[i..j]$ of $S$ is called a SUS for a position $p$~($1 \leq p \leq n$)
iff the interval $[i..j]$ contains $p$, $X$ occurs in $S$ exactly once, 
and	every substring containing $p$ which is shorter than $S[i..j]$ occurs at least twice in $S$.
The SUS problem is to preprocess a given string $S$ so that
SUSs for query positions $p$ can be answered quickly.
The study on the SUS problem was initiated by Pei et al.,
and is motivated by an application to bioinformatics e.g.,
designing polymerase chain reaction (PCR) primer~\cite{Pei}.
Pei et al.~\cite{Pei} showed an $\Theta(n^2)$-time and space
preprocessing scheme such that all $k$ SUSs for
a query position can be answered in $O(k)$ time.
Later, two independent groups, Tsuruta et al.~\cite{Tsuruta} and Ileri et al.~\cite{ileri14_SUS},
showed algorithms that use $\Theta(n)$ time and space\footnote{Throughout this paper, we measure the space complexity of an algorithm with the number of \emph{words} that the algorithm occupies in the word RAM model, unless otherwise stated.} for preprocessing,
and all SUSs can be answered in $O(k)$ time per query.
To be able to handle huge text data where $n$ can be massively large,
there have been further efforts to reduce the space usage.
Hon et al.~\cite{HonTX15} proposed an ``in-place'' algorithm which works within
space of the input string $S$ and two output arrays $A$ and $B$ of length $n$ each,
namely, in $n \log_2 \sigma$ \emph{bits} plus $2n$ words of space.
After the execution of their algorithm that takes $O(n)$ time,
the beginning and ending positions of a SUS for each text position
$i$~($1 \leq i \leq n$) are consequently stored in $A[i]$ and $B[i]$, respectively,
and $S$ remains unchanged.
Hon et al.'s algorithm can be extended to handle
SUSs with approximate matches, with a penalty of $O(n^2)$ preprocessing time.
For a pre-determined parameter $\tau$,
Ganguly et al.~\cite{GangulyHST17_SUS}
proposed a time-space trade-off algorithm for the SUS problem
that uses $O(n / \tau)$ additional working space (apart from the input string $S$)
and answers each query in $O(n \tau^2 \log \frac{n}{\tau})$ time.
They also proposed a ``succinct'' data structure of
$4n + o(n)$ \emph{bits} of space
that can be built in $O(n \log n)$ time and can answer a SUS for each given query
position in $O(1)$ time.
Another approach to reduce the space requirement for the SUS problem
is to work on a ``compressed'' representation of the string $S$.
Mieno et al.~\cite{MienoIBT16} developed a data structure of $\Theta(m)$ space
(or $\Theta(m \log n)$ \emph{bits} of space)
that answers all $k$ SUSs for a given position in $O(\sqrt{\log m / \log \log m} + k)$ time,
where $m$ is the size of the \emph{run length encoding} (\emph{RLE})
of the input string $S$.
This data structure can be constructed in $O(m \log m)$ time with $O(m)$ words of working space
if the input string $S$ is already compressed by RLE,
or in $O(n + m \log m)$ time with $O(m)$ working space
if the input string $S$ is given without being compressed.

A generalized version of the SUS problem,
called the \emph{interval} SUS problem, is to answer
SUSs that contain a query interval $[s, t]$ with $1 \leq s \leq t \leq n$.
Hu et al.~\cite{HuPT14} proposed an optimal $\Theta(n)$ time and space algorithm
to preprocess a given string $S$
so that all $k$ SUSs for a given query interval are reported in $O(k)$ time.
Mieno et al.'s data structure~\cite{MienoIBT16}
also can answer interval SUS queries
with the same preprocessing time/space and query time as above.

Recently, a new variant of the SUS problem, called the
\emph{shortest unique palindromic substring} (\emph{SUPS}) problem is considered~\cite{SUPS}.
A substring $P = S[i..j]$ is called a SUPS for an interval $[s,t]$
iff $P$ occurs exactly once in $S$, $[s,t] \subseteq [i,j]$,
and every palindromic substring of $S$ which contains interval $[s,t]$ and is shorter
than $P$ occurs at least twice in $S$.
The study on the SUPS problem is motivated by an application in molecular biology.
Inoue et al.~\cite{SUPS} showed how to preprocess a given string $S$ of length $n$
in $\Theta(n)$ time and space so that all $\alpha$ SUPSs (if any)
for a given interval can be answered 
in $O(\alpha + 1)$ time\footnote{It is possible that $\alpha = 0$ for some intervals.}.
While this solution is optimal in terms of the length $n$ of the input string,
no space-economical solutions for the SUPS problem were known.

In this paper, we present the \emph{first} space-economical solution
to the SUPS problem based on RLE.
The proposed algorithm computes
a data structure of $\Theta(m)$ space
that answers each SUPS query in $O(\sqrt{\log m / \log \log m}+\alpha)$ time.
The most interesting part of our algorithm is how to preprocess
a given RLE string of length $m$ in $O(m (\log \sigma_{RLE_S} + \sqrt{\log m / \log \log m}))$ time,
where $\sigma_{RLE_S}$ is the number of distinct runs in the RLE of $S$.
Note that $\sigma_{RLE} \leq m$ always holds.
For this sake, we propose RLE versions of 
Manacher's maximal palindrome algorithm~\cite{Manacher75} and Rubinchik and Shur's eertree data structure~\cite{EERTREE},
which may be of independent interest.
We remark that our preprocessing scheme is quite different from
Mieno et al.'s method~\cite{MienoIBT16} for the SUS problem on RLE strings
and Inoue et al.'s method~\cite{SUPS} for the SUPS problem on plain strings.

  Additionaly, we consider a variant of the SUPS problem where
  a query interval is also given in a run-length encoded form.
  For this variant of the problem,
  we present two alternative algorithms with faster queries.
  The first one answers queries in $O(\sqrt{\log\log m /\log\log\log m} + \alpha)$ time
  and can be built $O(m \log \sigma_{\rle{S}} + m \sqrt{\log m / \log\log m})$ time,
  and the second one answers queries in $O(\log \log m + \alpha)$ time
  and can be built in $O(m \log \sigma_{\rle{S}})$ time.
  Both of these data structures require $O(m)$ space.

A part of the results presented here appeared in a preliminary version of this paper~\cite{WatanabeNIBT19}.

\section{Preliminaries}

\subsection{Strings}
Let $\Sigma$ be an ordered {\em alphabet} of size $\sigma$.
An element of $\Sigma^*$ is called a {\em string}.
The length of a string $S$ is denoted by $|S|$.
The empty string $\varepsilon$ is a string of length 0.
%Let $\Sigma^+$ be the set of non-empty strings,
%i.e., $\Sigma^+ = \Sigma^* - \{\varepsilon \}$.
For a string $S = XYZ$, $X$, $Y$ and $Z$ are called
a \emph{prefix}, \emph{substring}, and \emph{suffix} of $S$, respectively.
% A prefix $x$ of $w$ is called a \emph{proper prefix} of $w$ if $x \neq w$.
The $i$-th character of a string $S$ is denoted by $S[i]$, for $1 \leq i \leq |S|$.
Let $S[i..j]$ denote the substring of $S$ that begins at position $i$ and ends at
position $j$, for $1 \leq i \leq j \leq |S|$.
For convenience, let $S[i..j] = \varepsilon$ for $i > j$.

For any string $S$, let $\rev{S} = S[|S|] \cdots S[1]$
denote the reversed string of $S$.
A string $P$ is called a \emph{palindrome} iff $P = \rev{P}$.
A substring $P = S[i..j]$ of a string $S$ is called a \emph{palindromic substring}
iff $P$ is a palindrome.
For a palindromic substring $P = S[i..j]$,
$\frac{i+j}{2}$ is called the \emph{center} of $P$.
A palindromic substring $P = S[i..j]$ is said to be a \emph{maximal palindrome} of $S$,
iff $S[i-1] \neq S[j+1]$, $i = 1$ or $j = |S|$.
A suffix of string $S$ that is a palindrome is called a \emph{suffix palindrome} of $S$.
Clearly any suffix palindrome of $S$ is a maximal palindrome of $S$.

We will use the following lemma in the analysis of our algorithm.
\begin{lemma}[\cite{DroubayJP01}] \label{lem:distinct_pal}
Any string of length $k$ can contain at most $k+1$
distinct palindromic substrings (including the empty string $\varepsilon$).
\end{lemma}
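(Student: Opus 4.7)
The plan is to prove the statement by induction on $k$, with the inductive step driven by the claim that appending one character to a string introduces at most one new distinct palindromic substring.

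For the base case, take $k=0$: the empty string contains only the empty palindrome $\varepsilon$, giving at most $0+1=1$ distinct palindromes as required. For the inductive step, assume the bound holds for strings of length $k-1$, and consider an arbitrary string $T = Sa$ of length $k$, where $a \in \Sigma$ and $|S| = k-1$. Every palindromic substring of $T$ is either a palindromic substring of $S$ (of which there are at most $k$ by the inductive hypothesis) or a palindromic substring of $T$ that ends exactly at position $k$, i.e., a palindromic suffix of $T$. So it suffices to show that at most one palindromic suffix of $T$ is \emph{new}, meaning it does not occur as a substring of $S$.

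The key step is the following observation about palindromic suffixes: if $P$ is the \textbf{longest} palindromic suffix of $T$, then every \emph{other} palindromic suffix $Q$ of $T$ (with $|Q| < |P|$) already occurs inside $S$. This follows because $Q$, being a suffix of the palindrome $P$, is also a prefix of $P$ (since reversing $P$ fixes it). Therefore $Q$ occupies positions $[\,|T|-|P|+1 \;..\; |T|-|P|+|Q|\,]$ inside $T$, and because $|Q| < |P|$ the right endpoint is at most $|T|-1$, so this occurrence of $Q$ lies entirely within $S$. Hence every palindromic suffix of $T$ except possibly the longest one is already a palindromic substring of $S$, so at most one new distinct palindrome is contributed by extending $S$ to $T = Sa$.

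Combining the inductive hypothesis (at most $k$ distinct palindromes in $S$) with the single extra palindrome possibly added by appending $a$ yields at most $k+1$ distinct palindromic substrings in $T$, closing the induction. I do not expect any serious obstacle; the only place requiring care is the argument that a proper palindromic suffix of $P$ is also a prefix of $P$ and therefore has a prior occurrence strictly inside $S$, which is exactly where the palindrome symmetry $P = \rev{P}$ is used.
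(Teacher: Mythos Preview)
Your proof is correct and is the standard argument due to Droubay, Justin, and Pirillo. Note that the paper does not actually prove this lemma: it is stated with a citation to~\cite{DroubayJP01} and used as a black box, so there is no in-paper proof to compare against. Your inductive argument---showing that only the longest palindromic suffix of $T=Sa$ can be new because every shorter palindromic suffix, being also a prefix of that longest one, already occurs inside $S$---is exactly the classical proof from the cited reference.
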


\subsection{MUPSs and SUPSs}

For any strings $X$ and $S$,
let $\occ_S(X)$ denote the number of occurrences of $X$ in $S$,
i.e., $\occ_S(X) = |\{i \mid S[i..i+|X|-1] = X\}|$.
A string $X$ is called a \emph{unique} substring of a string $S$
iff $\occ_S(X) = 1$.
A substring $P = S[i..j]$ of string $S$
is called a \emph{minimal unique palindromic substring} (\emph{MUPS})
of a string $S$ iff
(1) $P$ is a unique palindromic substring of $S$ and
(2) either $|P| \geq 3$ and the palindrome $Q = S[i+1..j-1]$ satisfies
$\occ_S(Q) \geq 2$, or $1 \leq |P| \leq 2$.

\begin{lemma}[\cite{SUPS}] \label{lem:MUPS_do_not_nest}
  MUPSs do not nest, namely, 
  for any pair of distinct MUPSs,
  one cannot contain the other.
\end{lemma}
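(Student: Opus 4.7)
The plan is to argue by contradiction. Suppose two distinct MUPSs $P_1 = S[i_1..j_1]$ and $P_2 = S[i_2..j_2]$ satisfy $[i_2, j_2] \subsetneq [i_1, j_1]$. The key fact to exploit is that $P_1$ is itself a palindrome, so its symmetry will force any palindromic substring sitting inside $P_1$ off-center to have a second, mirrored occurrence; this second occurrence will contradict the uniqueness of $P_2$.

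First, I would show that $P_2$ must share the center of $P_1$. Because $P_1$ equals its own reverse, the substring occurring at positions $[i_2, j_2]$ in $S$ also occurs at the mirrored positions $[i_1 + j_1 - j_2,\ i_1 + j_1 - i_2]$ inside $P_1$, and, since $P_2$ is itself a palindrome, this mirrored substring equals $P_2$. If the two intervals were distinct, then $\occ_S(P_2) \geq 2$, contradicting that $P_2$ is a unique substring of $S$. Hence $i_2 + j_2 = i_1 + j_1$, i.e., $P_2$ and $P_1$ are centered at the same position.

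With the common center in hand, the second step uses the MUPS minimality of $P_1$. The equality of centers together with $[i_2,j_2] \subsetneq [i_1,j_1]$ forces $|P_1| - |P_2| = 2(i_2 - i_1) \geq 2$, so in particular $|P_1| \geq 3$ (if $|P_1| \leq 2$ we already contradict $|P_2| \geq 1$). The minimality clause in the MUPS definition then gives $\occ_S(Q) \geq 2$ for the inner palindrome $Q = S[i_1+1..j_1-1]$. Since $P_2$ is a palindrome centered at the same point as $Q$ and no longer than $Q$, it sits at a fixed offset inside $Q$, so every occurrence of $Q$ in $S$ carries an occurrence of $P_2$; hence $\occ_S(P_2) \geq 2$, contradicting uniqueness of $P_2$. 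I expect the only real obstacle to be the index bookkeeping behind the mirror argument of the first step; once the palindromic symmetry is written out carefully, the rest follows immediately from the MUPS definition.
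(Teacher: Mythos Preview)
Your argument is correct. The two steps---forcing the shared center via the palindromic mirror symmetry of $P_1$, and then pushing the repeat of $Q = S[i_1+1..j_1-1]$ down to a repeat of $P_2$---are exactly the right ideas, and the index computations check out (the mirrored interval is $[i_1+j_1-j_2,\ i_1+j_1-i_2]$, the substring there is $\rev{P_2} = P_2$, and $|P_2| \le |P_1|-2 = |Q|$ gives $[i_2,j_2] \subseteq [i_1+1,j_1-1]$).

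As for comparison with the paper: this lemma is not proved here at all---it is imported verbatim from~\cite{SUPS} (Inoue et al.) and stated without argument. So there is no in-paper proof to compare against. Your write-up is in fact the standard way this lemma is established in the SUPS literature, so you may regard it as matching the intended (cited) proof.
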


Due to Lemma~\ref{lem:MUPS_do_not_nest},
both of the beginning positions and the ending positions of MUPSs
are monotonically increasing.
Let $\M_S$ denote the list of MUPSs in $S$
sorted in increasing order of their beginning positions
(or equivalently the ending positions) in $S$.

Let $[s,t]$ be an integer interval over the positions in a string $S$,
where $1 \leq s \leq t \leq |S|$.
A substring $P = S[i..j]$ of string $S$ is called
a \emph{shortest unique palindromic substring} ($\emph{SUPS}$)
for interval $[s,t]$ of $S$,
iff
(1) $P$ is a unique palindromic substring of $S$,
(2) $[s, t] \subseteq [i, j]$, and
(3) there is no unique palindromic substring $Q = S[i'..j']$
such that $[s, t] \subseteq [i', j']$ and $|Q| < |P|$.
We give an example in Fig.~\ref{fig:sups-mups}.
\begin{figure}[t]
  \centerline{\includegraphics[width=0.9\textwidth]{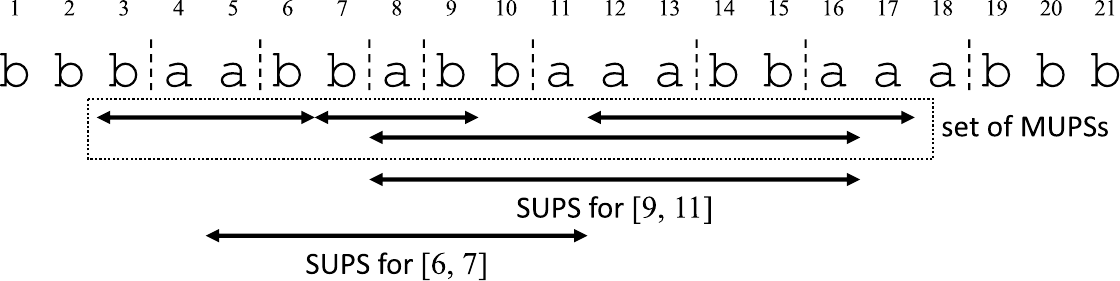}}
  \caption{
  	This figure shows all $\mupss$ and some $\supss$ of 
  	string $\rle{S} = \mathtt{b}^3\mathtt{a}^2\mathtt{b}^2\mathtt{a}^1\mathtt{b}^2\mathtt{a}^3\mathtt{b}^2\mathtt{a}^3\mathtt{b}^3$.
  	There are 4 $\mupss$ illustrated in the box.
  	The $\sups$ for interval $[6, 7]$ is $S[5..11]$, 
  	and the $\sups$ for interval $[9, 11]$ is $S[8..16]$.
  }
  \label{fig:sups-mups}
\end{figure}

\subsection{Run length encoding (RLE)}

The {\em run-length encoding} $\rle{S}$ of string $S$ is
a compact representation of $S$ such that
each maximal run of the same characters in $S$ is represented
by a pair of the character and the length of the run.
More formally, let $\mathcal{N}$ denote the set of positive integers.
For any non-empty string $S$, 
$\rle{S} = (a_1, e_1),  \ldots, (a_m, e_m)$,
where $a_j \in \Sigma$ and $e_j \in \mathcal{N}$ for any $1 \leq j \leq m$,
and $a_j \neq a_{j+1}$ for any $1 \leq j < m$.
E.g., if $S = \mathtt{aacccccccbbabbbb}$,
then $\rle{S} = (\mathtt{a}, 2), (\mathtt{c}, 7), (\mathtt{b}, 2), (\mathtt{a}, 1), (\mathtt{b}, 4)$.
Each $(a, e)$ in $\rle{S}$ is called a (character) \emph{run},
and $e$ is called the exponent of this run.
We also denote each run by $a^{e}$
when it seems more convenient and intuitive.
For example, we would write as $(a, e)$ when it seems more convenient
to treat it as a kind of character (called an RLE-character),
and would write as $a^e$ when it seems more convenient
to treat it as a string consisting of $e$ $a$'s.

The \emph{size} of $\rle{S}$ is the number $m$ of runs in $\rle{S}$.
Let $\rlebp{j}$ (resp. $\rleep{j}$) denote the beginning (resp. ending) position 
of the $j$th run in the string $S$,
i.e., $\rlebp{j} = 1+\sum_{i=0}^{j-1} e_i$ with $e_0 = 0$
and $\rleep{j} = \sum_{i=1}^{j}e_i$.
The \emph{center} of the $j$th run is $\frac{\rlebp{j}+\rleep{j}}{2}$.

For any two ordered pairs $(a, e), (a', e') \in \Sigma \times \mathcal{N}$
of a character and positive integer,
we define the equality such that $(a, e) = (a', e')$
iff $a = a'$ and $e = e'$ both hold.
We also define a total order of these pairs
such that $(a, e) < (a', e')$ iff $a < a'$,
or $a = a'$ and $e < e'$.

An occurrence of a palindromic substring
$P = S[i..i']$ of a string $S$ with $\rle{S}$ of size $m$ is
said to be \emph{RLE-bounded} 
if $i = \rlebp{j}$ and $i' = \rleep{j'}$ for some $1 \leq j \leq j' \leq m$,
namely, if both ends of the occurrence
touch the boundaries of runs.
An RLE-bounded occurrence 
$P = S[i..i']$ is said to be \emph{RLE-maximal} 
if $(a_{j-1}, e_{j-1}) \neq (a_{j'+1}, e_{j'+1})$,
$j = 1$ or $j' = m$.
Note that an RLE-maximal occurrence
of a palindrome may not be maximal in the string $S$.
E.g., consider string $S = \mathtt{caabbcccbbaaaac}$ with $\rle{S} = \mathtt{c^1a^2b^2c^3b^2a^4c^1}$.
\begin{itemize}
\item The occurrence of palindrome $\mathtt{c^3}$ is RLE-bounded
         but is neither RLE-maximal nor maximal.
\item The occurrence of palindrome $\mathtt{b^2c^3b^2}$ is RLE-maximal but is not maximal.
\item The occurrence of palindrome $\mathtt{a^2b^2c^3b^2a^2}$ is not RLE-maximal but is maximal.
\item The first (leftmost) occurrence of palindrome $\mathtt{a^2}$ is both RLE-maximal and maximal.
\end{itemize}

\subsection{Problem}

In what follows, we assume that our input strings are given as RLE strings.
In this paper, we tackle the following problem.

\begin{problem}[$\sups$ problem on run-length encoded strings]
\label{prob:SUPS_RLE}
\leavevmode %\par
\begin{description}
\item[Preprocess:] $\rle{S} = (a_1, e_1), \ldots, (a_m, e_m)$ of size $m$ representing a string $S$ of length $n$.
\item[Query:] An integer interval $[s, t]$~$(1 \leq s \leq t \leq n)$.
\item[Return:] All SUPSs for interval $[s, t]$.
\end{description}
\end{problem}

In case the string $S$ is given as a plain string of length $n$,
then the time complexity of our algorithm will be increased by 
an additive factor of $n$ that is needed to compute $\rle{S}$,
while the space usage will stay the same since $\rle{S}$ can be computed in
constant space.

\section{Computing MUPSs from RLE strings}
\label{sec:MUPS}

The following known lemma suggests that
it is helpful to compute the set $\mathcal{M}_S$ of MUPSs of $S$
as a preprocessing for the SUPS problem.

\begin{lemma}[\cite{SUPS}] \label{lem:unimups_in_sups}
  For any SUPS $S[i..j]$ for some interval,
  there exists exactly one MUPS that is contained in the interval $[i,j]$.
  Furthermore, the MUPS has the same center as the SUPS $S[i..j]$.
\end{lemma}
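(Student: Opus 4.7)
The plan is to treat the two halves—existence-with-same-center and uniqueness—as separate arguments, both built around the involution that a palindrome $P = S[i..j]$ induces on the interval $[i,j]$ via $p \mapsto i+j-p$.

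For the existence and same-center part, I would start from the SUPS $P = S[i..j]$ itself and perform a symmetric shrinking. Consider the chain of palindromic substrings $S[i..j], S[i+1..j-1], S[i+2..j-2], \ldots$, all sharing the common center $c = (i+j)/2$. Their occurrence counts are monotonically non-decreasing under shrinking, and $P$ has exactly one occurrence. Let $M = S[i'..j']$ be the shortest element of this chain that is still unique. If $|M| \le 2$, then $M$ is a MUPS by definition. Otherwise, by minimality of $|M|$ in the chain, the interior palindrome $S[i'+1..j'-1]$ has $\occ_S \geq 2$, so $M$ again satisfies the MUPS definition. This $M$ is contained in $[i,j]$ and, by construction, has the same center $c$ as $P$.

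For uniqueness, I would argue by contradiction. Suppose $M' = S[a..b] \subseteq [i,j]$ is a MUPS whose center $c' = (a+b)/2$ differs from $c$. Since $P$ is a palindrome, the mirror map inside $P$ sends the substring $S[a..b]$ to $S[i+j-b..i+j-a]$, and this substring equals $\rev{S[a..b]} = \rev{M'} = M'$ because $M'$ is itself a palindrome. The two occurrence positions $a$ and $i+j-b$ coincide exactly when $c' = c$, so the assumption $c' \ne c$ yields two distinct occurrences of $M'$ in $S$ (both inside $[i,j]$), contradicting the uniqueness of $M'$. Hence the MUPS contained in $[i,j]$ is unique, and by the existence argument it must be the one centered at $c$.

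The conceptual core—the mirror-involution trick—is short and clean; the main friction will be bookkeeping the parity and edge cases (even vs. odd center, the very short chain elements of length $0$, $1$, or $2$, and ensuring $i+j-b \ge i$ and $i+j-a \le j$ so that the mirrored occurrence really lies in $S$). None of these should require heavy work, but they are the place where a written-out proof has to be careful.
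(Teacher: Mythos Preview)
The paper does not prove this lemma itself; it is quoted from~\cite{SUPS} without argument, so there is no in-paper proof to compare against.

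Your plan is sound. The symmetric-shrinking chain cleanly produces a MUPS with center $c=(i+j)/2$ inside $[i,j]$, and the mirror-involution argument correctly shows that every MUPS contained in $[i,j]$ must be centered at $c$ (otherwise its reflection inside the palindrome $P$ gives a second occurrence).

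There is one small step you have not written down: from ``every MUPS in $[i,j]$ is centered at $c$'' you still need ``at most one MUPS in $[i,j]$.'' Two distinct MUPSs $M_1, M_2$ sharing the center $c$ would satisfy $M_1 \subsetneq M_2$ (say $|M_1|<|M_2|$); then $M_1$ lies inside the immediate interior of $M_2$, which by the MUPS condition on $M_2$ is a repeating palindrome, so $M_1$ is repeating too, contradicting its uniqueness. Equivalently, just invoke Lemma~\ref{lem:MUPS_do_not_nest}. One sentence of this kind closes the gap; the rest of your bookkeeping concerns (parity, boundary indices) are routine as you note.
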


\subsection{Properties of MUPSs on RLE strings}

Now we present some useful properties of MUPSs
on the run-length encoded string $\rle{S} = (a_1, e_1), \ldots, (a_m, e_m)$.

\begin{lemma} \label{lem:mups_center}
For any MUPS $S[i..j]$ in $S$, 
there exists a unique integer $k$~($1 \leq k \leq m$)
such that $\frac{i+j}{2} = \frac{\rlebp{k}+\rleep{k}}{2}$.
\end{lemma}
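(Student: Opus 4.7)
The plan is to locate the center $c = \frac{i+j}{2}$ of the MUPS $P = S[i..j]$ within the RLE, and to show that assuming $c$ differs from every run center leads to a contradiction.

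First I would identify a run $k$ whose position range covers $c$. If $|P|$ is odd, $c$ is an integer and lies in a unique run $k$ with $\rlebp{k} \leq c \leq \rleep{k}$. If $|P|$ is even, write $c = p + \tfrac12$. By the palindrome property, $S[p] = S[p+1]$, so positions $p$ and $p+1$ belong to the same run $k$; otherwise they would sit at a boundary $p = \rleep{k}$, $p+1 = \rlebp{k+1}$ and carry distinct RLE-characters, violating the palindrome condition (note $|P| \geq 2$ in this subcase, so $P$ must include both). In either parity, $c \in [\rlebp{k}, \rleep{k}]$.

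Next, assume for contradiction that $c \neq \frac{\rlebp{k} + \rleep{k}}{2}$, and by symmetry take $c > \frac{\rlebp{k} + \rleep{k}}{2}$, i.e.\ $c$ lies strictly in the right half of run $k$. Let $r = \rleep{k} - c$. If $j > \rleep{k}$, then position $c + r + 1$ lies in run $k+1$ and carries character $a_{k+1}$, while the strict inequality $c > \frac{\rlebp{k}+\rleep{k}}{2}$ gives $c - r - 1 \geq \rlebp{k}$, so position $c - r - 1$ is still in run $k$ and carries $a_k \neq a_{k+1}$. This contradicts $P$ being a palindrome, so $j \leq \rleep{k}$. Since $i = 2c - j$, the inequality $c > \frac{\rlebp{k}+\rleep{k}}{2}$ also forces $i \geq \rlebp{k}$, so $P$ is entirely contained in run $k$. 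Hence $P = a_k^{\ell}$ with $\ell = j - i + 1 \leq e_k$.

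Now I invoke the uniqueness of $P$: the substring $a_k^{\ell}$ occurs $e_k - \ell + 1$ times inside run $k$ alone, so uniqueness forces $\ell = e_k$, yielding $i = \rlebp{k}$, $j = \rleep{k}$, and thus $c = \frac{\rlebp{k}+\rleep{k}}{2}$, contradicting the assumption. Uniqueness of $k$ is automatic, as distinct runs occupy disjoint position ranges and therefore have distinct centers. The main subtlety will be the even-length case, specifically the observation that a half-integer center cannot sit on a run boundary; this detail lets me treat the odd and even cases in a unified way via the single ``in-run'' location step.
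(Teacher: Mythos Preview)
Your proof is correct and follows essentially the same route as the paper's: assume the center $c$ of the MUPS is not a run center, locate the run $k$ containing $c$, and conclude that the MUPS must lie entirely inside run $k$ with length strictly less than $e_k$, contradicting uniqueness. Your version is considerably more careful than the paper's terse argument---you explicitly justify why an even-length center cannot straddle a run boundary and why the palindrome cannot extend past $\rleep{k}$---but the underlying idea is identical.
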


\begin{proof}
	Suppose on the contrary that there is a MUPS $S[i..j]$ 
	such that $\frac{i+j}{2} \neq \frac{\rlebp{k}+\rleep{k}}{2}$ for any $1 \leq k \leq m$.
	Let $l$ be the integer that satisfies $\rlebp{l} \leq \frac{i+j}{2} \leq \rleep{l}$.
	By the assumption, the longest palindrome 
	whose center is $\frac{i+j}{2}$ is $a_l^{\min\{i-\rlebp{l}+1,\rleep{l}-j+1\}}$.
           However, this palindrome $a_l^{\min\{i-\rlebp{l}+1,\rleep{l}-j+1\}}$ occurs
           at least twice in the $l$th run $a_l^{e_l}$.
	Hence MUPS $S[i..j]$ is not a unique palindromic substring, a contradiction.
\end{proof}

The following corollary is immediate from Lemma~\ref{lem:mups_center}.
\begin{corollary} \label{coro:num-of-mupss}
	For any string $S$, $|\M_S| \leq m$.
\end{corollary}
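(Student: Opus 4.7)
The plan is to build an injection from $\M_S$ into the index set $\{1,\ldots,m\}$ of runs of $\rle{S}$, which immediately yields $|\M_S|\leq m$. Lemma~\ref{lem:mups_center} already does most of the work: every MUPS $S[i..j]$ is assigned a \emph{unique} $k\in\{1,\ldots,m\}$ such that $\frac{i+j}{2}=\frac{\rlebp{k}+\rleep{k}}{2}$, so the map $\varphi:\M_S\to\{1,\ldots,m\}$ sending each MUPS to this $k$ is well-defined.

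The only remaining task is to check that $\varphi$ is injective. Suppose, for contradiction, that two distinct MUPSs $S[i_1..j_1]$ and $S[i_2..j_2]$ satisfy $\varphi(S[i_1..j_1])=\varphi(S[i_2..j_2])=k$. Then $\frac{i_1+j_1}{2}=\frac{i_2+j_2}{2}$, so the two MUPSs share a common center. Since both are palindromes centered at this common point, the shorter one is necessarily an ``inner'' substring of the longer one, i.e., one interval $[i_\ell,j_\ell]$ is contained in the other. This contradicts Lemma~\ref{lem:MUPS_do_not_nest}, which forbids nesting among distinct MUPSs. Hence $\varphi$ is injective.

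There is no real obstacle here; the only very mild subtlety to verify is the geometric claim that two palindromes sharing a center are nested in the length order, which is immediate from the definition of center together with $|S[i_1..j_1]|\neq|S[i_2..j_2]|$ (if the lengths agreed, the two MUPSs would be identical, contradicting distinctness). Putting these pieces together gives the bound.
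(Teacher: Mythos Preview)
Your proof is correct and matches the paper's intent: the paper simply states that the corollary is ``immediate from Lemma~\ref{lem:mups_center}'' without spelling out any argument, and your injection $\varphi$ together with the non-nesting Lemma~\ref{lem:MUPS_do_not_nest} is exactly the natural way to make that immediacy precise. The only difference is that you have written out the injectivity step explicitly, whereas the paper leaves it to the reader.
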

It is easy to see that the above bound is tight:
for instance, any string where each run has a distinct character
(i.e., $m = \rlesigma$) contains exactly $m$ MUPSs.
Our preprocessing and query algorithms which will follow
are heavily dependent on this lemma and corollary.

\subsection{RLE version of Manacher's algorithm} \label{subsec:manacher}

Due to Corollary~\ref{coro:num-of-mupss},
we can restrict ourselves to computing palindromic substrings
whose center coincides with the center of each run.
These palindromic substrings are called \emph{run-centered} palindromes.
Run-centered palindromes will be candidates of MUPSs of the string $S$.

To compute run-centered palindromes from $\rle{S}$,
we utilize Manacher's algorithm~\cite{Manacher75}
that computes
all maximal palindromes for a given (plain) string of length $n$
in $O(n)$ time and space.
Manacher's algorithm is based only on character equality comparisons,
and hence it works with general alphabets.

Let us briefly recall how Manacher's algorithm works.
It processes a given string $S$ of length $n$ from left to right.
It computes an array $\MaxPal$ of length $2n-1$ such that
$\MaxPal{}[c]$ stores
the length of the maximal palindrome with center $c$
for $c = 1, 1.5, 2, \ldots, n-1, n-0.5, n$.
Namely, Manacher's algorithm processes a given string $S$ in an online manner
from left to right.
This algorithm is also able to compute, for each position $i = 1, \ldots, n$,
the longest palindromic suffix of $S[1..i]$ in an online manner.

Now we apply Manacher's algorithm to our run-length encoded input string
$\rle{S} = (a_1, e_1), \ldots,$ $(a_m, e_m)$.
Then, what we obtain after the execution of Manacher's algorithm
over $\rle{S}$ is all \emph{RLE-maximal} palindromes of $S$.
Note that by definition all RLE-maximal palindromes are run-centered.
Since $\rle{S}$ can be regarded as a string of length $m$
over an alphabet $\Sigma \times \mathcal{N}$,
this takes $O(m)$ time and space.

\begin{remark} \label{rmk:maximal_extension}
  If wanted, we can compute all \emph{maximal} palindromes
  of $S$ in $O(m)$ time after the execution of Manacher's algorithm to $\rle{S}$.
  First, we compute every run-centered
  maximal palindrome
  $P_l$ that has its center in each $l$th run in $\rle{S}$.
  For each already computed run-centered RLE-maximal palindrome
  $Q_l = S[\rlebp{i}..\rleep{j}]$ with $1 < i \leq j < m$,
  it is enough to first check whether $a_{i-1} = a_{j+1}$.
  If no, then $P_l = Q_l$,
  and if yes then we can further extend both
  ends of $Q_l$ with $(a_{i-1}, \min\{e_{i-1}, e_{j+1}\})$
  and obtain $P_l$.
  As a side remark,
  we note that any other maximal palindromes of $S$ are not run-centered,
  which means that any of them consists only of the same characters
  and lie inside of one character run.
  Such maximal palindromes are trivial and need not be explicitly computed.
\end{remark}

\subsection{RLE version of eertree data structure}

The \emph{eertree}~\cite{EERTREE} of a string $S$,
denoted $\eertree(S)$,
is a pair of two rooted trees $\mathsf{T}_{\mathrm{odd}}$ and $\mathsf{T}_\mathrm{even}$
which represent all distinct palindromic substrings of $S$.
The root of $\mathsf{T}_{\mathrm{odd}}$ represents the empty string $\varepsilon$
and each non-root node of $\mathsf{T}_{\mathrm{odd}}$ 
represents a non-empty palindromic substring of $S$ of odd length.
Similarly, the root of $\mathsf{T}_\mathrm{even}$ represents
the empty string $\varepsilon$ and each non-root node of 
$\mathsf{T}_\mathrm{even}$ represents a non-empty palindromic substring of $S$ of even length.
From the root $r$ of $\mathsf{T}_{\mathrm{odd}}$,
there is a labeled directed edge $(r, a, v)$
if $v$ represents a single character $a \in \Sigma$.
For any non-root node $u$ of $\mathsf{T}_{\mathrm{odd}}$ or $\mathsf{T}_\mathrm{even}$,
there is a labeled directed edge $(u, a, v)$ from $u$ to node $v$
with character label $a \in \Sigma$ if $aua = v$.
For any node $u$,
the labels of out-going edges of $u$ must be mutually distinct.

By Lemma~\ref{lem:distinct_pal},
any string $S$ of length $n$ can contain at most $n+1$
distinct palindromic substrings (including the empty string $\varepsilon$).
Thus, the size of $\eertree(S)$ is linear in the string length $n$.
Rubinchik and Shur~\cite{EERTREE} showed how to construct
$\eertree(S)$ in $O(n \log \sigma_S)$ time and $O(n)$ space,
where $\sigma_S$ is the number of distinct characters in $S$.
They also showed how to compute the number of occurrences of
each palindromic substring in $O(n \log \sigma_S)$ time and $O(n)$ space,
using $\eertree(S)$.

Now we introduce a new data structure named \emph{RLE-eertrees} based on eertrees.
Let $\rle{S}  = (a_1, e_1), \ldots, (a_m, e_m)$,
and let $\rleSigma$ be the set of maximal runs of $S$,
namely, $\rleSigma = \{(a, e) \mid (a, e) = (a_i, e_i) \mbox{ for some } 1 \leq i \leq m\}$.
Let $\sigma_{RLE} = |\rleSigma|$.
Note that $\sigma_{RLE} \leq m$ always holds.
The RLE-eertree of string $S$, denoted by $\rletree(S)$,
is a \emph{single} eertree $\mathsf{T}_{\mathtt{odd}}$
\emph{over the RLE alphabet $\rlesigma \subset \Sigma \times \mathcal{N}$},
which represents distinct run-centered palindromes of $S$ which have
an RLE-bounded occurrence $[i,i']$ such that $i = \rlebp{j}$ and $i' = \rleep{j'}$
for some $1 \leq j \leq j' \leq m$
(namely, the both ends of the occurrence touch the boundary of runs),
or an occurrence as a maximal palindrome in $S$.
We remark that the number of \emph{runs} in each palindromes in $\rletree(S)$ is odd,
but their decompressed string length may be odd or even.
In $\rletree(S)$,
there is a directed labeled edge $(u, a^e, v)$ from node $u$ to node $v$
with label $a^e \in \Sigma_{RLE}$
if (1) $a^eua^e = v$,
or (2) $u = \varepsilon$ and $v = a^e \in \Sigma \times \mathcal{N}$.
Note that if the in-coming edge of a node $u$ is labeled with $a^e$,
then any out-going edge of $u$ cannot have a label $a^f$
with the same character $a$.
Since $\rletree{S}$ is an eertree over the alphabet $\Sigma_{RLE}$ of size $\rlesigma \leq m$,
it is clear that the number of out-going edges of each node is bounded by $\rlesigma$.
We give an example of $\rletree(S)$ in Fig.~\ref{fig:rletree}.

\begin{figure}[t]
  \centerline{\includegraphics[width=0.9\textwidth]{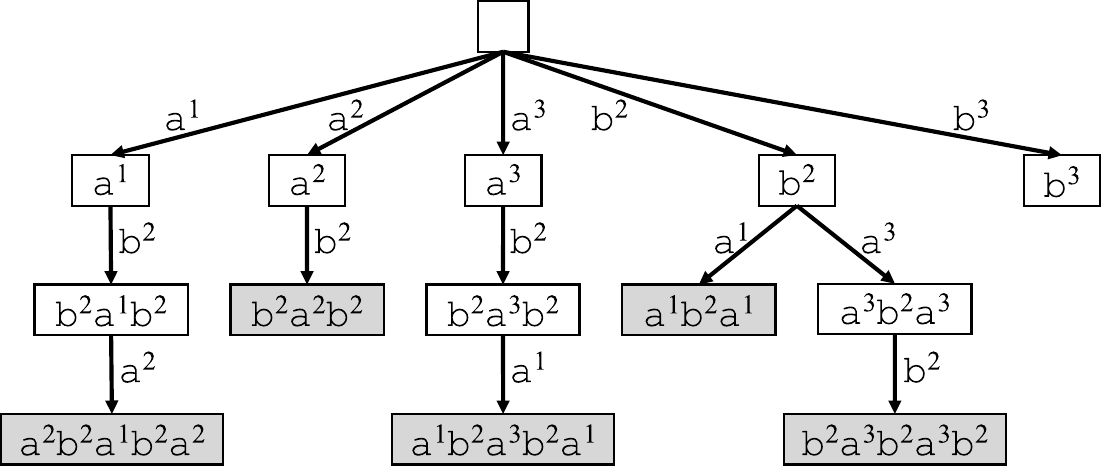}}
  \caption{
  	The RLE-eertree $\rletree(S)$ of $\rle{S} = 
  	\mathtt{b}^3\mathtt{a}^2\mathtt{b}^2\mathtt{a}^1\mathtt{b}^2\mathtt{a}^3\mathtt{b}^2\mathtt{a}^3\mathtt{b}^3$.
        Each white node represents a run-centered palindromic substring
        of $S$ that has an RLE-bounded occurrence,
        while each gray node represents a run-centered palindromic substring
        of $S$ that has a maximal occurrence in $S$. 
  }
  \label{fig:rletree}
\end{figure}

\begin{lemma}
  Let $S$ be any string of which the size of $\rle{S}$ is $m$.
  Then, the number of nodes in $\rletree(S)$ is at most $2m+1$.
\end{lemma}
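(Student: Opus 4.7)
The plan is to partition the non-root nodes of $\rletree(S)$ into two families, each of size at most $m$. Let $\mathcal{A}$ be the set of distinct non-empty palindromic substrings of $S$ that admit an RLE-bounded occurrence, and let $\mathcal{B}=\{M_1,\ldots,M_m\}$ be the set of distinct palindromic strings of the form $M_k$, where $M_k$ denotes the maximal palindrome of $S$ centered at the run center $\frac{\rlebp{k}+\rleep{k}}{2}$. By the definition of $\rletree(S)$ (the ``white'' and ``gray'' nodes depicted in Figure~\ref{fig:rletree}), its non-root nodes are exactly the elements of $\mathcal{A}\cup\mathcal{B}$; it therefore suffices to show $|\mathcal{A}|\leq m$ and $|\mathcal{B}|\leq m$.

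The bound $|\mathcal{B}|\leq m$ is immediate, since there are only $m$ run centers and the maximal palindrome of $S$ at each center is unique. For $|\mathcal{A}|$, I would lift the counting to the RLE alphabet: the map $(a_i,e_i)\cdots(a_j,e_j)\mapsto S[\rlebp{i}..\rleep{j}]$ is a bijection between the palindromic substrings of $\rle{S}$, viewed as a length-$m$ string over $\rleSigma$, and the RLE-bounded palindromes of $S$, with injectivity guaranteed by the canonicity of RLE representations. Applying Lemma~\ref{lem:distinct_pal} to $\rle{S}$ yields at most $m+1$ distinct palindromic substrings including $\varepsilon$, whence $|\mathcal{A}|\leq m$.

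Combining these, the number of non-root nodes is $|\mathcal{A}\cup\mathcal{B}|\leq|\mathcal{A}|+|\mathcal{B}|\leq 2m$, and adding the root gives the claimed bound $2m+1$. A consistency check worth recording along the way is that every node indeed represents a palindrome with an odd number of runs: because adjacent RLE-characters of $\rle{S}$ are distinct (adjacent runs carry different characters), any palindromic substring of $\rle{S}$ of positive even length would force two equal adjacent RLE-characters, a contradiction; in particular every RLE-bounded palindrome of $S$ is automatically run-centered, which makes the partition into $\mathcal{A}$ and $\mathcal{B}$ exhaustive. The main conceptual move is recognizing that the RLE-bounded palindromes can be counted purely at the RLE level via Lemma~\ref{lem:distinct_pal}; the remaining steps are a union bound and a pigeonhole over the $m$ run centers.
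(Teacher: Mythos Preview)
Your proof is correct and follows essentially the same approach as the paper: both arguments bound the RLE-bounded palindromes by applying Lemma~\ref{lem:distinct_pal} to $\rle{S}$ viewed as a length-$m$ string over $\rleSigma$, bound the run-centered maximal palindromes by the number $m$ of run centers, and take the union plus the root. Your additional remarks (the explicit bijection via canonicity of RLE, and the parity check showing every palindromic substring of $\rle{S}$ has odd length) are correct refinements that the paper leaves implicit.
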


\begin{proof}
  First, we consider $\rle{S}$ as a string of length $m$ over
  the alphabet $\Sigma_{RLE}$.
  It now follows from Lemma~\ref{lem:distinct_pal}
  that the number of non-empty distinct run-centered palindromic substrings of $S$
  that have an RLE-bounded occurrence is at most $m$.
  Each of these palindromic substrings are represented by
  a node of $\rletree(S)$,
  and let $\rletree(S)'$ denote the tree
  consisting only of these nodes
  (in the example of Fig.~\ref{fig:rletree},
  $\rletree(S)'$ is the tree consisting only of the white nodes).

  Now we count the number of nodes in $\rletree(S)$
  that do not belong to $\rletree(S)'$ 
  (the gray nodes in the running example of Fig.~\ref{fig:rletree}).
  Since each palindrome represented by this type of node
  has a run-centered maximal occurrence in $S$,
  the number of such palindromes is bounded by the number $m$ of runs in $\rle{S}$.
  
  Hence, including the root that represent the empty string,
  there are at most $2m+1$ nodes in $\rletree(S)$.
\end{proof}

\begin{lemma} \label{lem:rletree_constrction}
  Given $\rle{S}$ of size $m$,
  $\rletree(S)$ can be built in $O(m \log \rlesigma)$ time and $O(m)$ space,
  where the out-going edges of each node are sorted according to
  the total order of their labels.
  Also, in the resulting $\rletree(S)$,
  each non-root node $u$ stores the number of
  occurrences of $u$ in $S$ which are RLE-bounded or maximal.
\end{lemma}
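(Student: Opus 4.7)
The plan is to run Rubinchik and Shur's online $\eertree$ construction on $\rle{S}$ viewed as a length-$m$ string over the RLE alphabet $\rleSigma$, and then graft on the gray nodes using the run-centered maximal palindromes produced by the $\rle$-version of Manacher's algorithm from Section~\ref{subsec:manacher}.

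First I would execute Rubinchik--Shur on $\rle{S}$ over $\rleSigma$, storing the out-going edges of every node in a balanced BST ordered by the total order on $\Sigma\times\mathcal{N}$ defined earlier (this also delivers, for free, the sorted-edge condition required by the lemma). Each of the $m$ iterations performs $O(1)$ suffix-link hops plus one edge lookup or insertion of cost $O(\log\rlesigma)$, so this pass takes $O(m\log\rlesigma)$ time in $O(m)$ space. Since RLE decoding is injective, the distinct palindromic substrings of $\rle{S}$ over $\rleSigma$ are in bijection with the distinct RLE-bounded run-centered palindromes of $S$, i.e.\ exactly the white nodes of $\rletree(S)$. Along the way, for each prefix $\rle{S}[1..j]$ I would keep a pointer to the node representing the longest palindromic suffix and bump an auxiliary counter there; these counters will feed the RLE-bounded occurrence counts below.

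Next I would attach the gray nodes. The $\rle$-Manacher algorithm, augmented as in the Remark, yields in $O(m)$ time, for every run center $k$, the run-centered maximal palindrome $M_k$ of $S$ together with the length $e'$ of its outer partial run $a^{e'}$ and the RLE-bounded core $W_k$ with $M_k = a^{e'} W_k a^{e'}$. If $M_k$ is itself RLE-bounded it already coincides with a white node. Otherwise $W_k$ is a white node present after the first pass, and I attach a fresh leaf as a child of $W_k$ via an edge labeled $a^{e'}$, inserting the edge into $W_k$'s BST in $O(\log\rlesigma)$ time. There are at most $m$ such events, so this phase costs $O(m\log\rlesigma)$ time and $O(m)$ extra space, keeping the total node count within the $2m+1$ bound of the preceding lemma.

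Finally I would fill in the occurrence counts. For white nodes, the standard eertree trick---topologically order the suffix-link tree by palindrome length and propagate the auxiliary counters upward---gives the number of RLE-bounded occurrences of every white palindrome in $O(m)$ time. I then pass once more through the list of $M_k$'s and, whenever $M_k$ is not RLE-bounded, increment the count of the (gray or white) node representing $M_k$ by one; RLE-bounded $M_k$'s are skipped to avoid double-counting with their already-tallied RLE-bounded occurrences. The main hurdle I foresee is the bookkeeping that matches each run-centered maximal palindrome to the correct white core $W_k$ from the first-pass tree and guarantees that the label $a^{e'}$ is fresh at $W_k$ (so the BST insertion is well-defined and unique); both facts should follow from the RLE-maximality of $M_k$ together with the RLE-boundedness of $W_k$, and they justify the $O(\log\rlesigma)$ per-event cost that drives the overall $O(m\log\rlesigma)$ time bound.
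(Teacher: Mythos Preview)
Your two-phase plan---run Rubinchik--Shur on $\rle{S}$ over $\rleSigma$ to obtain the white nodes with their RLE-bounded occurrence counts, then for each run center extend the RLE-maximal core $W_k$ by the partial run $a^{e'}$ to add the gray nodes---is exactly the construction the paper gives, with the same time and space analysis.

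The one genuine error is your freshness claim. You assert that the edge label $a^{e'}$ is always fresh at $W_k$ and that this ``should follow from the RLE-maximality of $M_k$ together with the RLE-boundedness of $W_k$.'' It does not. Suppose $\rle{S}$ contains both $\ldots a^{3}\,W\,a^{2}\ldots$ and $\ldots a^{2}\,W\,a^{2}\ldots$ for some palindrome $W$ whose outermost runs are not $a$'s. At the first of these centers $W$ is RLE-maximal (since $(a,3)\neq(a,2)$) and the run-centered maximal palindrome is $a^{2}Wa^{2}$, which is \emph{not} RLE-bounded there; at the second center $a^{2}Wa^{2}$ \emph{is} RLE-bounded and is therefore already a white child of $W$ after phase one. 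When you process the first center in phase two, the edge labeled $(a,2)$ out of $W$ already exists. The same collision occurs when two distinct centers share the same non-RLE-bounded maximal palindrome. The paper handles this explicitly: before creating a child, look up $(a,e')$ in the edge set of $W_k$; if it is present, increment that node's occurrence count instead of inserting a duplicate. You effectively concede this possibility in your counting phase when you write ``the (gray or white) node representing $M_k$,'' so the repair is local and keeps the $O(\log\rlesigma)$ cost per center; but the argument you give for freshness is incorrect and should be replaced by this lookup-then-insert step.
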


\begin{proof}
  Our construction algorithm comprises three steps.
  We firstly construct $\rletree(S)'$,
  secondly compute an auxiliary array $\CPal$ that will be used for the next step,
  and thirdly we add some nodes
  that represent run-centered maximal palindromes which are not in $\rletree(S)'$
  so that the resulting tree forms the final structure $\rletree(S)$.

  Rubinchik and Shur~\cite{EERTREE} proposed an online algorithm
  which constructs $\eertree(T)$ of a string of length $k$ in
  $O(k \log \sigma_T)$ time with $O(k)$ space,
  where $\sigma_T$ denotes the number of distinct characters in $T$.
  They also showed how to store, in each node,
  the number of occurrences of the corresponding palindromic substring in $T$.
  Thus, the Rubinchik and Shur algorithm applied to
  $\rle{S}$ computes $\rletree(S)'$ in $O(m \log \rlesigma)$ time with $O(m)$ space.
  Also, now each node $u$ of $\rletree(S)'$ stores the number of
  \emph{RLE-bounded} occurrence of $u$ in $S$.
  This is the first step.

  The second step is as follows: 
  Let $\CPal$ be an array of length $m$ such that,
  for each $1 \leq i \leq m$,
  $\CPal[i]$ stores a pointer to the node in $\rletree(S)'$ 
  that represents the RLE-bounded palindrome centered at $i$.
  A simple application of the Rubinchik and Shur algorithm to
  $\rle{S}$ only gives us the leftmost occurrence of each RLE-bounded palindrome
  in $\rletree(S)'$.
  Hence, we only know the values of $\CPal$ in the positions
  that are the centers of the leftmost occurrences of RLE-bounded palindromes.
  To compute the other values in $\CPal$,
  we run Manacher's algorithm to $\rle{S}$ as in Section~\ref{subsec:manacher}.
  Since Manacher's algorithm processes $\rle{S}$ in an online manner from left to right,
  and since we already know the leftmost occurrences of all RLE-bounded palindromes
  in $\rle{S}$, we can copy the pointers from previous occurrences. 
  In case where an RLE-bounded palindrome extends with a newly read RLE-character $(a, e)$
  after it is copied from a previous occurrence during the execution of Manacher's algorithm,
  then we traverse the edge labeled $(a, e)$ from the current node of $\rletree(S)'$.
  By repeating this until the mismatch is found
  in extension of the current RLE-bounded palindrome,
  we can find the corresponding node for this RLE-bounded palindrome.
  This way we can compute $\CPal[i]$ for all $1 \leq i \leq m$ in $O(m \log \rlesigma)$ total
  time with $O(m)$ total space.

  In the third step, we add new nodes that represent
  run-centered maximal (but not RLE-bounded) palindromic substrings.
  For this sake, we again apply Manacher's algorithm to $\rle{S}$,
  but in this case it is done as in Remark~\ref{rmk:maximal_extension} of Section~\ref{subsec:manacher}.
  With the help of $\CPal$ array,
  we can associate each run $(a_l, e_l)$ with
  the RLE-bounded palindromic substring that has the center in $(a_l, e_l)$.
  Let $Q_l = S[\rlebp{i}..\rleep{j}]$ denote this palindromic substring for $(a_l, e_l)$,
  where $1 \leq i \leq l \leq j \leq m$,
  and $u_l$ the node that represents $Q_l$ in $\rletree(S)'$.
  We first check whether $a_{i-1} = a_{j+1}$.
  If no, then $Q_l$ does not extend from this run $(a_l, e_l)$,
  and if yes then we extend both ends of $Q_l$ with $(a_{i-1}, \min\{e_{i-1}, e_{j+1}\})$.
  Assume w.l.o.g. that $e_{i-1} = \min\{e_{i-1}, e_{j+1}\}$.
  If there is no out-going edge of $u_l$ with label $(a_{i-1}, e_{i-1})$,
  then we create a new child of $u_l$ with an edge labeled  $(a_{i-1}, e_{i-1})$.
  Otherwise, then let $v$ be the existing child of $u_l$
  that represents $a_{i-1}^{e_{i-1}} u_l a_{i-1}^{e_{i-1}}$.
  We increase the number of occurrences of $v$ by 1.
  This way, we can add all new nodes
  and we obtain $\rletree(S)$.
  Note that each node stores the number of RLE-bounded or maximal occurrence
  of the corresponding run-centered palindromic substring.
  It is easy to see that the second step takes a total of $O(m \log \rlesigma)$
  time and $O(m)$ space.
\end{proof}

It is clear that for any character $a \in \Sigma$,
there can be only one MUPS of form $a^e$.
Namely, $a^e$ is a MUPS iff $e$ is the largest exponent
for all runs of $a$'s in $S$ and $\occ_S(a^e) = 1$.
Below, we consider other forms of MUPSs.
Let $P$ be a non-empty palindromic substring of string $S$
that has a run-centered RLE-bounded occurrence.
For any character $a \in \Sigma$,
let $\emax$ and $\esecond$ denote the largest and second largest
positive integers such that
$a^{\emax}Pa^{\emax}$ and $a^{\esecond}Pa^{\esecond}$ are
palindromes that have 
run-centered RLE-bounded or maximal occurrences in $S$.
If such integers do not exist, then let $\emax = \nil$ and $\esecond = \nil$.
\begin{observation} \label{obs:MUPSonRLE}
  There is at most one MUPS of form $a^e P a^e$ in $S$. Namely,
  \begin{enumerate}
    \item[(1)] The palindrome $a^{\esecond+1}Pa^{\esecond+1}$ is a MUPS of $S$
                  iff $\emax \neq \nil$, $\esecond \neq \nil$, and $\occ_S(a^{\emax}Pa^{\emax})$ $= 1$.
    \item[(2)] The palindrome $a^{1}Pa^{1}$ is a MUPS of $S$
                  iff $\emax \neq \nil$, $\esecond = \nil$, and $\occ_S(a^{\emax}Pa^{\emax}) = 1$.
    \item[(3)] There is no MUPS of form $a^e P a^e$ with any $e \geq 1$
                  iff either $\emax = \nil$, or $\emax \neq \nil$ and $\occ_S(a^{\emax}Pa^{\emax}) > 1$.
  \end{enumerate}
\end{observation}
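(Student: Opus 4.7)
The plan is to derive all three items from a single counting identity: for every integer $e \ge 1$,
\[
\occ_S(a^e P a^e) \;=\; N_{\ge e},
\]
where $N_{\ge e}$ denotes the number of positions in $S$ at which $a^{e'} P a^{e'}$ is a run-centered RLE-bounded or maximal palindrome for some $e' \ge e$. This identity is proved via a canonical-extension bijection. Given any occurrence of $a^e P a^e$ at $[i,i']$, the two $a$-runs of $S$ immediately flanking $P$ have lengths $L, R \ge e$, and symmetrically enlarging the occurrence to $a^{\min(L,R)} P a^{\min(L,R)}$ produces a palindrome that is RLE-bounded when $L = R$ (both $a$-runs are fully captured at their run boundaries) and a maximal palindrome when $L \ne R$ (the extension halts because the two neighboring characters disagree---an $a$ on the longer side, a non-$a$ on the shorter). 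Conversely each run-centered RLE-bounded or maximal $a^{e'} P a^{e'}$ contains, for every $1 \le e \le e'$, exactly one centered occurrence of $a^e P a^e$. A small but essential check is that the center of every occurrence of $P$ coincides with the center of an actual run of $S$: the internal runs of $P$'s RLE must equal matching full runs of $S$ at every occurrence, pinning the center of $P$ to a run center, so the extended palindromes really are run-centered.

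With the identity in hand, items (1) and (3) reduce to bookkeeping on $\emax$ and $\esecond$. Since no valid exponent lies strictly between $\esecond$ and $\emax$, we have $N_{\ge \esecond+1} = N_{\ge \emax}$, hence $\occ_S(a^{\esecond+1} P a^{\esecond+1}) = \occ_S(a^{\emax} P a^{\emax})$. For the $\Leftarrow$ direction of (1), the hypothesis $\occ_S(a^{\emax} P a^{\emax}) = 1$ yields uniqueness of $a^{\esecond+1} P a^{\esecond+1}$; its inner palindrome $a^{\esecond} P a^{\esecond}$ satisfies $N_{\ge \esecond} \ge 1 + 1 = 2$ (at least one $\esecond$-exponent position plus the unique $\emax$-exponent position), meeting the inner-occurrence requirement, and the length is $\ge 3$ since $|P| \ge 1$, so $a^{\esecond+1} P a^{\esecond+1}$ is a MUPS; the $\Rightarrow$ direction reads the same identities in reverse. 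For (3), either no candidate palindrome occurs at all (when $\emax = \nil$), or the identity forces every $a^e P a^e$ with $1 \le e \le \emax$ to have at least $\occ_S(a^{\emax} P a^{\emax}) > 1$ occurrences, precluding uniqueness and hence MUPS-hood.

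Item (2) is where I anticipate the main obstacle. The identity immediately gives $\occ_S(a^1 P a^1) = \occ_S(a^{\emax} P a^{\emax}) = 1$ when $\esecond = \nil$, so uniqueness is clear, but the inner palindrome of $a^1 P a^1$ is $P$ itself, and the identity does not control $\occ_S(P)$: the case $e = 0$ lies outside its scope because $P$ may occur at positions not flanked by any $a$ on one or both sides. Completing the $\Leftarrow$ direction of (2) therefore relies on the contextual knowledge that $\occ_S(P) \ge 2$; in the intended application $P$ corresponds to a node of $\rletree(S)$ whose occurrence count is already stored, and the observation is applied only when that count exceeds one, which supplies the missing inner-occurrence condition and yields the MUPS property.
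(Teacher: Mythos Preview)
The paper states this as an observation without proof, so there is no argument to compare against. Your counting identity $\occ_S(a^e P a^e) = N_{\ge e}$ via the canonical-extension bijection is correct and yields items (1) and (3) cleanly. One small overstatement: the claim that ``the center of every occurrence of $P$ coincides with the center of an actual run of $S$'' fails when $P$ is a single run $b^f$ occurring strictly inside a longer $b$-run; but your bijection only needs this for occurrences of $P$ sandwiched between $a$'s with $a$ different from the end character of $P$, and there it does hold, so the argument is unaffected.

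You are also right that item (2), as stated, cannot be proved without the extra hypothesis $\occ_S(P) \ge 2$. A concrete witness: take $S = \mathtt{cabac}$, $P = \mathtt{b}$, and $a = \mathtt{a}$. Then $\emax = 1$, $\esecond = \nil$, and $\occ_S(\mathtt{aba}) = 1$, yet $\mathtt{aba}$ is not a MUPS because its inner palindrome $\mathtt{b}$ is already unique. So the defect is in the statement, not in your reasoning. Your proposed repair---restricting to nodes $P$ of $\rletree(S)$ whose stored occurrence count is at least two---is the natural fix and is consistent with how the subsequent lemma uses the observation, even though the paper does not spell this restriction out.
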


\begin{lemma}
$\M_S$ can be computed in $O(m \log \rlesigma)$ time and $O(m)$ space.
\end{lemma}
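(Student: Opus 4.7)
The plan is to compute $\M_S$ in two disjoint pieces, using the $\rletree(S)$ of Lemma~\ref{lem:rletree_constrction} as the central data structure. First I build $\rletree(S)$ in $O(m \log \rlesigma)$ time and $O(m)$ space; recall that every non-root node $u$ represents a run-centered palindromic substring $P_u$ that has either an RLE-bounded or a maximal occurrence in $S$, stores the number of such occurrences, and has its outgoing edges sorted by the total order on $\Sigma \times \mathcal{N}$. Single-run MUPSs of form $a^e$ can then be enumerated by a single scan of $\rle{S}$: for each character I remember the largest exponent $e_a$ and the number of $a$-runs achieving it; $a^{e_a}$ is a MUPS exactly when that number is one, because uniqueness then follows from $e_a$ being the maximal $a$-run length, and when $e_a \geq 3$ the inner palindrome $a^{e_a-2}$ automatically occurs three times inside the unique $a^{e_a}$-run so the MUPS minimality condition is free.

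For MUPSs of form $a^e P a^e$ with $P$ non-empty I visit every non-root node $u$ of $\rletree(S)$, let $P = P_u$, and apply Observation~\ref{obs:MUPSonRLE} character by character. Because the outgoing edges of $u$ are sorted by $(a,e)$, the edges sharing a fixed leading character $a$ form a contiguous block in which exponents are strictly increasing; a single sweep therefore lets me extract $\emax$ as the last exponent in the block and $\esecond$ as the penultimate one (or $\nil$ when the block has size one). The stored occurrence count at the child labeled $(a,\emax)$ tells me whether $\occ_S(a^{\emax} P a^{\emax}) = 1$, and Observation~\ref{obs:MUPSonRLE} then emits at most one MUPS of form $a^{\esecond+1} P a^{\esecond+1}$ (respectively $a^1 P a^1$) per $(P,a)$ pair. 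The work at $u$ is proportional to its out-degree, so summing over the $O(m)$ nodes of $\rletree(S)$ the whole phase runs in $O(m)$ time.

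Adding the $O(m \log \rlesigma)$ construction cost and the $O(m)$ time for the single-run and multi-run phases yields the claimed $O(m \log \rlesigma)$ total time within $O(m)$ space. The main subtlety I anticipate is justifying that the count stored at the child labeled $(a,\emax)$ really equals $\occ_S(a^{\emax} P a^{\emax})$: an occurrence of this palindrome that were neither RLE-bounded nor string-maximal could be enlarged symmetrically by one more $a$ on each side, producing a strictly longer run-centered palindrome $a^{\emax+1} P a^{\emax+1}$ with a run-centered RLE-bounded or maximal occurrence in $S$, contradicting the maximality of $\emax$. Once this observation is in place the algorithm reduces to a direct traversal-based translation of Observation~\ref{obs:MUPSonRLE}.
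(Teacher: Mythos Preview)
Your approach is essentially the paper's: build $\rletree(S)$ via Lemma~\ref{lem:rletree_constrction}, then at every node use the sorted outgoing edges to read off $\emax$ and $\esecond$ per character and apply Observation~\ref{obs:MUPSonRLE}, spending time proportional to the out-degree. Your explicit separation of single-run MUPSs via a scan of $\rle{S}$ is just the root case of the same traversal, and your justification that the stored count at the $(a,\emax)$-child equals $\occ_S(a^{\emax}Pa^{\emax})$ is a welcome detail the paper leaves implicit.

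One step is missing, however: $\M_S$ is by definition the list of MUPSs \emph{sorted by beginning position}, and your traversal only enumerates them as abstract palindromes attached to tree nodes. You still need to locate each MUPS in $S$ and output them in order, without spending $O(m\log m)$ on a comparison sort. The paper handles this by observing (via Lemma~\ref{lem:mups_center} and Lemma~\ref{lem:MUPS_do_not_nest}) that every MUPS is centered at a unique run and no run hosts two MUPS centers; hence one allocates an array $A[1..m]$ indexed by run, drops each discovered MUPS into the slot of its center run (the center run of $a^{\emax}Pa^{\emax}$ is known from the construction, and the MUPS shares that center), and then scans $A$ left to right. Adding this bucket-by-center-run step completes your argument in $O(m)$ additional time and space.
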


\begin{proof}
  For each node $u$ of $\rletree(S)$,
  let $\Sigma_u$ be the set of characters $a$
  such that there is an out-going edge of $u$ labeled by $(a, e)$
  with some positive integer $e$.
  Due to Observation~\ref{obs:MUPSonRLE},
  for each character in $\Sigma_u$,
  it is enough to check the out-going edges
  which have the largest and second largest exponents with character $a$.
  Since the edges are sorted,
  we can find all children of $u$ that represent MUPSs in
  time linear in the number of children of $u$.
  Hence, given $\rletree(S)$,
  it takes $O(m)$ total time to compute all MUPSs in $S$.
  $\rletree(S)$ can be computed in $O(m \log \rlesigma)$ time and $O(m)$ space
  by Lemma~\ref{lem:rletree_constrction}.

  What remains is how to sort the MUPSs in increasing order of their beginning positions.
  We associate each MUPS with the run where its center lies.
  Since each MUPS occurs in $S$ exactly once
  and MUPSs do not nest (Lemma~\ref{lem:MUPS_do_not_nest}),
  each run cannot contain the centers of two or more MUPSs.
  We compute an array $A$ of size $m$ such that
  $A[j]$ contains the corresponding interval of the MUPS
  whose center lies in the $j$th run in $\rle{S}$, if it exists.
  After computing $A$,
  we scan $A$ from left to right.
  Since again MUPSs do not nest,
  this gives as the sorted list $\M_S$ of MUPSs.
  It is clear that this takes a total of $O(m)$ time and space. 
\end{proof}

\section{SUPS queries on RLE strings}
\label{sec:query_algorithm}

In this section, we present our algorithm for SUPS queries.
Our algorithm is based on Inoue et al.'s algorithm~\cite{SUPS} 
for SUPS queries on a plain string.
The big difference is that
the space that we are allowed for is limited to $O(m)$.

\subsection{Data structures}\label{data_structure}
As was discussed in Section~\ref{sec:MUPS}, we can compute the list
$\M_S$ of all MUPSs of string $S$ efficiently.  We store $\M_S$ using
the three following arrays:
\begin{itemize}
	\item $\mupsbeg[i]$ : the beginning position of the $i$th MUPS in $\M_S$.
	\item $\mupsend[i]$ : the ending position the $i$th MUPS in $\M_S$.
	\item $\mupslen[i]$ : the length of the $i$th MUPS in $\M_S$.
\end{itemize}
Since the number of MUPSs in $\M_S$ is at most $m$
(Corollary~\ref{coro:num-of-mupss}),
the length of each array is at most $m$.
In our algorithm, we use \emph{range minimum queries} and
\emph{predecessor/successor queries} on integer arrays.

Let $A$ be an integer array of length $d$.
A range minimum query $\rmq{A}{i}{j}$ returns one of $\arg \min_{i \leq k \leq j}\{A[k]\}$
for a given interval $[i, j]$ in $A$.
\begin{lemma}[e.g.~\cite{rmqspace}]
	We can construct an $O(d)$-space data structure in $O(d)$ time for an integer array $A$ of length $d$
	which can answer $\rmq{A}{i}{j}$ in constant time for any query $[i, j]$.
\end{lemma}

Let $B$ be an array of $d$ positive integers in $[1, N]$ in increasing order.
The predecessor and successor queries on $B$ are defined for any $1 \leq k \leq N$ as follows.
\begin{eqnarray*}
  \Pred{B}{k} & = &
  \begin{cases}
    \max\{ i \mid B[i] \leq k\} \quad & \mbox{if it exists,}\\
    0 \quad & \mbox{otherwise.}
  \end{cases} \\
  \Succ{B}{k} & = &
  \begin{cases}
    \min\{ i \mid B[i] \geq k\} \quad & \mbox{if it exists,}\\
    N+1\quad & \mbox{otherwise.}
  \end{cases}
\end{eqnarray*}
\begin{lemma}[\cite{Beame200238}]\label{lem:pred_succ_data_structure}
  We can construct, in $O(d \sqrt{\log d / \log\log d})$ time,
  an $O(d)$-space data structure 
	for an array $B$ of $d$ positive integer in $[1, N]$ in increasing order
	which can answer $\Pred{B}{k}$ and $\Succ{B}{k}$ in $O(\sqrt{\log d / \log\log d})$ time for any query $k \in [1, N]$.
\end{lemma}

\subsection{Query algorithm}\label{query_algorithm}

Our algorithm simulates the query algorithm for a plain string~\cite{SUPS}
with $O(m)$-space data structures.
We summarize our algorithm below.

Let $[s, t]$ be a query interval such that $1 \leq s \leq t \leq n$.
Firstly, we compute the number of MUPSs contained in $[s, t]$.
This operation can be done in $O(\sqrt{\log m / \log\log m})$ 
by using $\Succ{\mupsbeg}{s}$ and $\Pred{\mupsend}{t}$.

Let $\num$ be the number of MUPSs contained in $[s, t]$.
If $\num \geq 2$, then there is no SUPS for this interval (Corollary~{1} of \cite{SUPS}).
Suppose that $\num = 1$.
Let $S[i..j]$ be the MUPS contained in $[s, t]$.
If $S[i-z..j+z]$ is a palindromic substring, 
then $S[i-z..j+z]$ is the only $\sups$ for $[s, t]$ where $z = \max \{ i-s, t-j \}$.
Otherwise, there is no $\sups$ for $[s, t]$ (Lemma~{6} of \cite{SUPS}).
Since this candidate has a run as the center, 
we can check whether $S[i-z..j+z]$ is a palindromic substring or not in constant time 
after computing all run-centered maximal palindromes.
Suppose that $\num = 0$ (this case is based on Lemma~{7} of \cite{SUPS}).
Let $p = \Pred{\mupsend}{t}, q = \Succ{\mupsbeg}{s}$.
We can check whether each of $S[\mupsbeg[p]-t+\mupsend[p]..t]$ 
and $S[s..\mupsend[q]+\mupsbeg[q]-s]$ is a palindrome or not.
If so, the shorter one is a candidate of $\supss$.
Let $\ell$ be the length of the candidates.
Other candidates are the shortest $\mupss$ which contain the query interval $[s, t]$.
If the length of these candidates is less than or equal to $\ell$, 
we need to compute these candidates as $\supss$.
We can compute these $\mupss$ by using range minimum queries on $\mupslen[p+1, q-1]$.
Thus, we can compute all $\supss$ in linear time w.r.t. the number of outputs (see \cite{SUPS} in more detail).

We conclude with the main theorem of this paper.
\begin{theorem}
  Given $\rle{S}$ of size $m$ for a string $S$,
  we can compute a data structure of $O(m)$ space in
  $O(m(\log \rlesigma + \sqrt{\log m / \log\log m}))$ time
  so that subsequent SUPS queries can be answered
  in $O(\alpha+\sqrt{\log m / \log\log m})$ time,
  where $\rlesigma$ denotes the number of distinct
  RLE-characters in $\rle{S}$ and $\alpha$ the number of SUPSs to report.
\end{theorem}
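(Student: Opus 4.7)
The plan is to assemble the ingredients developed in the earlier sections and bound the preprocessing and query complexities step by step.

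First, I would invoke the construction from Section~\ref{sec:MUPS} to obtain the sorted list $\M_S$ of MUPSs in $O(m \log \rlesigma)$ time and $O(m)$ space, and from $\M_S$ fill the three arrays $\mupsbeg$, $\mupsend$, and $\mupslen$, each of length at most $m$ by Corollary~\ref{coro:num-of-mupss}. As a byproduct of the RLE variant of Manacher's algorithm of Section~\ref{subsec:manacher}, which is already run inside the MUPS construction, I would retain, for each of the $m$ runs, the length of the run-centered maximal palindrome whose center lies in that run. This $O(m)$-size table enables constant-time testing of whether a candidate palindrome with a prescribed run center and prescribed radius actually occurs in $S$.

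Next, I would augment the arrays with the auxiliary query structures. Since MUPSs do not nest (Lemma~\ref{lem:MUPS_do_not_nest}), both $\mupsbeg$ and $\mupsend$ are sorted in increasing order, so the Beame--Fich predecessor/successor data structure applies directly and can be built on each in $O(m \sqrt{\log m / \log \log m})$ time using $O(m)$ space, supporting $\Pred{\mupsend}{k}$ and $\Succ{\mupsbeg}{k}$ queries in $O(\sqrt{\log m / \log \log m})$ time. On $\mupslen$ I build the linear-time, linear-space range minimum structure. Summing the three contributions gives the preprocessing bound $O(m(\log \rlesigma + \sqrt{\log m / \log \log m}))$ with $O(m)$ total space.

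For a query $[s, t]$, I would run the case analysis already sketched above, following Inoue et al.~\cite{SUPS}. A single $\Succ{\mupsbeg}{s}$ and a single $\Pred{\mupsend}{t}$ yield $\num$ in $O(\sqrt{\log m / \log \log m})$ time. If $\num \geq 2$ no SUPS exists; if $\num = 1$ the only candidate, obtained by symmetrically extending the contained MUPS to cover $[s, t]$, inherits its center from the MUPS and is therefore run-centered by Lemma~\ref{lem:mups_center}, so it is a palindrome iff its length is at most the stored run-centered maximal palindrome length for that run, decided in $O(1)$. If $\num = 0$, two further predecessor/successor queries locate the nearest external MUPSs, and the same length comparison gives a threshold $\ell$; the remaining candidates are the MUPSs of length at most $\ell$ lying strictly between those two external positions, and they are enumerated with repeated range minimum queries on $\mupslen$, each in $O(1)$ time. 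In total the query costs $O(\alpha + \sqrt{\log m / \log \log m})$ time.

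The main obstacle will be the constant-time palindromicity check used in the $\num = 1$ and $\num = 0$ branches: it is what reduces the apparent cost of verifying an RLE palindrome of length up to $n$ to a single comparison against the precomputed per-run table. Its correctness leans entirely on Lemma~\ref{lem:unimups_in_sups} and Lemma~\ref{lem:mups_center}, which together force every SUPS and every candidate extension examined during the query to be run-centered, so that a single per-run maximal length captures all relevant palindromic information; once that is granted, the three complexity contributions simply accumulate into the claimed bounds.
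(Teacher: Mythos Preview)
Your proposal is correct and follows essentially the same approach as the paper: compute $\M_S$ via the RLE-eertree machinery of Section~\ref{sec:MUPS}, store it in the three arrays, equip $\mupsbeg$ and $\mupsend$ with Beame--Fich predecessor/successor structures and $\mupslen$ with an RMQ structure, and answer queries by the three-way case split on $\num$ inherited from Inoue et al.\ \cite{SUPS}. Your explicit justification of the $O(1)$ palindromicity test via Lemmas~\ref{lem:unimups_in_sups} and~\ref{lem:mups_center} is exactly the point the paper relies on when it says ``since this candidate has a run as the center,'' and the only minor slip is that the predecessor/successor values $p,q$ in the $\num=0$ branch are already available from the computation of $\num$ itself, so no ``further'' queries are needed---but this does not affect the stated bounds.
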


\section{Faster algorithms for a variant of the SUPS problem}

In this section, we consider a variant of Problem~\ref{prob:SUPS_RLE}
where each query interval $[s, t]$ is given as a tuple representing
the left-end run $s_r$ that contains $s$,
the local position $s_p$ in the left-end run $s_r$ corresponding to $s$,
the right-end run $t_r$ that contains $t$,
and the local position $t_p$ in the right-end run $t_r$ corresponding to $t$.
Intuitively, queries are also run-length encoded in this variant of the problem.
Formally, this variant of the problem is defined as follows:
\begin{problem}[$\sups$ problem on run-length encoded strings with run-length encoded queries]
\label{prob:SUPS_RLE_faster}
\leavevmode %\par
\begin{description}
\item[Preprocess:] $\rle{S} = (a_1, e_1), \ldots, (a_m, e_m)$ of size $m$ representing a string $S$ of length $n$.
\item[Query:] Tuple $(s_r, s_p, t_r, t_p)$ representing
  query interval $[s, t]$~ $(1 \leq s \leq t \leq n)$,
  such that
	$s = \rlebp{s_r} + s_p - 1$,
	$t = \rlebp{t_r} + t_p - 1$,
	$1 \leq s_r \leq t_r \leq m$,
	$1 \leq s_p \leq e_{s_r}$, and
	$1 \leq t_p \leq e_{t_r}$.
\item[Return:] All SUPSs for interval $[s, t]$.
\end{description}
\end{problem}

In this section, we present two alternative algorithms for Problem~\ref{prob:SUPS_RLE_faster}.

\subsection{Further combinatorial properties on MUPSs}

The key to our algorithms is a combinatorial property of
maximal palindromes in RLE strings.
To show this property, we utilize the following result.

\begin{lemma}[\cite{ApostolicoBG95,MATSUBARA2009900}]
  \label{lem:suf_pal_arithmetic_progression}
	Let $S$ be any string of length $n$.
        The set of suffix palindromes of $S$
	can be partitioned into $O(\log n)$ disjoint groups,
        such that the suffix palindromes in the same group
        has the same shortest period.
        Namely, each group can be represented by a single arithmetic progression
	$\langle s, d, t \rangle$, 
        such that $s$ is the length of the shortest suffix palindrome
        in the group, $t$ is the number of suffix palindromes in the group,
        and $d$ is the common difference (i.e. the shortest period).
\end{lemma}
When there is only one element in a group (i.e. $t = 1$),
then we set $d = 0$.

By applying Lemma~\ref{lem:suf_pal_arithmetic_progression}
to $\rle{S} = (a_1, e_1) \cdots (a_m, e_m)$ where 
each RLE factor $(a_i, e_i)$ is regarded as a single character,
we immediately obtain the following corollary:

\begin{corollary} \label{coro:rle_suf_pal_arithmetic_progression}
  Let $S$ be any string and let $m$ be the size of $\rle{S}$.
  The set of RLE-bounded suffix palindromes of $\rle{S} = (a_1, e_1) \cdots (a_m, e_m)$
  can be partitioned into $O(\log m)$ disjoint groups,
  such that the RLE-bounded suffix palindromes in the same group
  has the same shortest period.
  Namely, each group can be represented by a single arithmetic progression
  $\langle s', d', t' \rangle$, 
  such that $s'$ is the number of runs in the shortest RLE-bounded suffix palindrome
  in the group, $t'$ is the number of RLE-bounded suffix palindromes in the group,
  and $d'$ is the common difference (i.e. the shortest period).
\end{corollary}

In the sequel,
we will store arithmetic progressions representing
the RLE-bounded suffix palindromes of \emph{all}
RLE-bounded prefixes $\rle{S}[1..i] = (a_1, e_1),$ $\ldots, (a_i, e_i)$ of $\rle{S}$.
It may seem that it takes $O(m \log m)$ total space
due to Corollary~\ref{coro:rle_suf_pal_arithmetic_progression}.
However, since each RLE-bounded suffix palindrome of $\rle{S}[1..i]$
is a \emph{maximal} palindrome of string
$\rle{S} = (a_1, e_1), \ldots,  (a_m, e_m)$ where each run $(a_i, e_i)$ is
regarded as a single character,
and since there are only $2m-1$ such maximal palindromes in
$\rle{S} = (a_1, e_1), \ldots,  (a_m, e_m)$,
the total space requirement for storing all arithmetic progressions
is $O(m)$.

The next lemma is a key to our algorithms.

\begin{lemma} \label{lem:number_of_mups_in_a_run}
  Let $S$ be any string and let $\rle{S} = a_1^{e_1} \cdots a_m^{e_m}$.
  For any $1 \leq i \leq m$,
  the number of MUPSs of $S$ that end in the $i$th run $a_i^{e_i}$ of $\rle{S}$,
  namely in the position interval $[\rlebp{i}, \rleep{i}]$ in $S$, is $O(\log m)$.
\end{lemma}

\begin{proof}
  It is possible that the $i$th run $a_i^{e_i}$ itself is a MUPS.
  To consider other MUPSs ending in the $i$th run,
  we consider MUPSs of $S$ that begin \emph{before} the $i$th run $a_i^{e_i}$
  and end in $a_i^{e_i}$,
  namely, those MUPSs that begin in position range $[1..\rleep{i-1}]$
  and end in $[\rlebp{i}..\rleep{i}]$.
  We observe that
  any \emph{palindromic substring} in $S$ that begins before the $i$th run $a_i^{e_i}$
  and ends in the $i$th run $a_i^{e_i}$
  can be obtained by extending some \emph{suffix palindrome} of
  $S[1..\rleep{i-1}]$
  to the left and to the right within $S$.
  We remark that the extension may not terminate at RLE boundaries
  and can terminate within runs.

  For any $1 < i \leq m$,
  let $\RBSP_{i-1}$ be the set of RLE-bounded suffix palindromes
  of $\rle{S}[1..i-1] = (a_1, e_1), \ldots, (a_{i-1}, e_{i-1})$
  whose beginning positions coincide with the beginning positions
  of some runs in $\rle{S}[1..i-1]$.
  Now it follows from Corollary~\ref{coro:rle_suf_pal_arithmetic_progression} that
  the lengths of the suffix palindromes in $\RBSP_{i-1}$ can be represented by $O(\log m)$
  arithmetic progressions.
  
  Let $\langle s', d', t' \rangle$ be a single arithmetic progression
  representing a group of suffix palindromes in $\RBSP_{i-1}$.
  In what follows, we show that for each $\langle s', d', t' \rangle$,
  the number of MUPSs ending in the $i$th run $a_i^{e_i}$
  that can be obtained by extending elements of $\langle s', d', t' \rangle$ is at most two.
  The case where $t' \leq 2$ is trivial,
  and hence let us consider the case where $t' \geq 3$.
  We consider the following sub-cases:

  \begin{enumerate}
		\item When $a_{i - (s' + (t' - 1)d') - 1} = a_{i}$, 
	          then we can obtain a palindrome that ends in the $i$th run $a_i^{e_i}$
		  by extending the longest palindrome $P$
                  belonging to $\langle s', d', t' \rangle$.
                  This extended palindrome can be a MUPS of $S$ that ends in the $i$th run.
                  We note that there is a unique positive integer $\ell$
                  such that $a_i^\ell P a_i^\ell$ is a MUPS in $S$.                  

		\item When $a_{i - s' - 1} = a_{i}$,
                  then for any $2 \leq j \leq t'$, we concider the palindrome that ends in
                  the $i$th run $a_i^{e_i}$ by maximally extending the $j$th longest palindrome
                  $S[\rlebp{i - (s' + (t' - j)d')}..\rleep{i - 1}]$ belonging  
                  to $\langle s', d', t' \rangle$.
                  The length of this extension is $\min\{e_{i - s' - 1}, e_i\}$
                  to either side.
                  Now, for any $3 \leq k\leq t'$, we have that 
	          \begin{align*}
                    & S[\rlebp{i \! - \! (s' \! + \! (t' \! - \! k)d')} \! - \! \min\{e_{i - s' - 1}, e_i\}
				..\rleep{i \! - \! 1} + \min\{e_{i - s' - 1}, e_i\}] = \\
		    & S[\rlebp{i \! - \! (s' \! + \! (t' \! - \! (k-1))d')} \! - \! \min\{e_{i - s' - 1}, e_i\}
		      ..\rleep{i \! - \! d' \! - \! 1} \! + \! \min\{e_{i - s' - 1}, e_i\}].
                  \end{align*}
                  This implies that for any $3 \leq k\leq t'$
                  any palindrome that is obtained by extending
                  the $k$th longest palindrome corresponding to $\langle s', d', t' \rangle$
                  and ending in the $i$th run $a_i^{e_i}$,
                  occurs at least twice in $S$.
                  Thus, the elements of $\langle s', d', t' \rangle$ except for
                  the longest one and the second longest one do not yield MUPSs
                  ending in the $i$th run $a_i^{e_i}$.
	\end{enumerate}
        Consequently, for each single group $\langle s', d', t' \rangle$,
        there are at most two MUPSs ending in the $i$th run $a_i^{e_i}$
        that can be obtained by extending palindromes belonging to $\langle s', d', t' \rangle$.
        It follows from Corollary~\ref{coro:rle_suf_pal_arithmetic_progression}
        that there are $O(\log i)$ groups for each $i$.
        Thus, the number of MUPSs that end in the $i$th run is bounded by $O(\log m)$
        for any $1 \leq i \leq m$.
\end{proof}

See Fig.~\ref{fig:lemma11-2}, \ref{fig:lemma11-3}, and \ref{fig:lemma11-4}
for concrete examples for the proof of Lemma~\ref{lem:number_of_mups_in_a_run}.

\begin{figure}[!ht]
	\centerline{\includegraphics[width=0.9\textwidth]{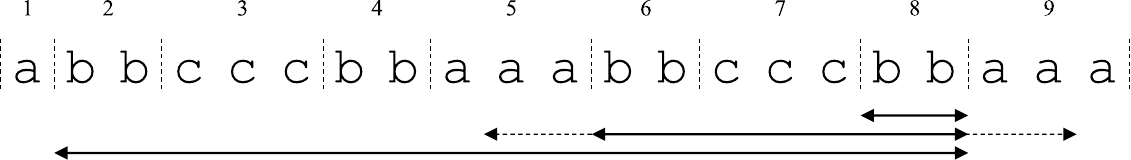}}
	\caption{
	  An example for the first case ($a_{i - (s' + (t' - 1)d') - 1} = a_{i}$) in the proof for 
          Lemma~\ref{lem:number_of_mups_in_a_run}.
	  Let $\rle{S} = \mathtt{a^1b^2c^3b^2c^3b^2c^3b^2a^3}$.
	  Let $i = 9$ and consider MUPSs that end in the $9$th run in $\rle{S}$.
          For $i-1 = 8$, the RLE-bounded suffix palindromes of $\rle{S}[1..8]$ are
          depicted by the solid arrows.
          Among those, we consider $\mathtt{b^2c^3b^2}$ that forms
          a single arithmetic progression $\langle s', d', t' \rangle = \langle 3, 0, 1 \rangle$.
          Since $a_{i - (s' + (t' - 1)d') - 1} = a_{5} = a_{9} = a_{i} = \mathtt{a}$,
          we can obtain palindrome $\mathtt{a^2b^2c^3b^2a^2}$
          ending in the $9$th run by extending
          the longest element belonging to this arithmetic progression (the extension is depicted by the broken arrows).
          This extended palindrome is a MUPS of $S$ ending in the $9$th run
          (namely, $\ell = 2$ is the unique extension length for which
          $a^{\ell}b^2c^3b^2a^{\ell}$ becomes a MUPS).
	}
\label{fig:lemma11-2}
\end{figure}

\begin{figure}[!ht]
	\centerline{\includegraphics[width=0.9\textwidth]{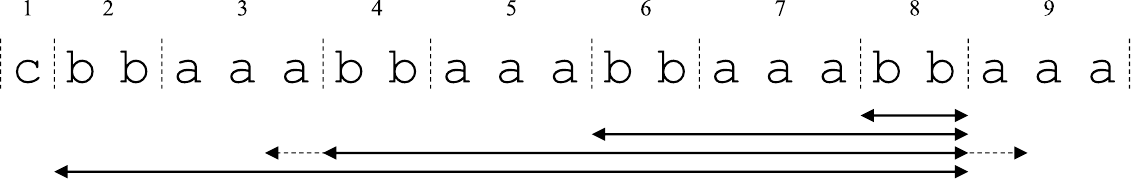}}
	\caption{
	  An example for the second case ($a_{i - s' - 1} = a_{i}$) in the proof for 
          Lemma~\ref{lem:number_of_mups_in_a_run}.
	  Let $\rle{S} = \mathtt{c^1b^2a^3b^2a^3b^2a^3b^2a^3}$
	  Let $i = 9$ and consider MUPSs that end in the $9$th run in $\rle{S}$.
          For $i-1 = 8$,
          consider the RLE-bounded suffix palindromes of $\rle{S}[1..8]$
          that can be represented by an arithmetic progression $\langle s', d', t' \rangle = \langle 1, 2, 4\rangle$.
          These RLE-bounded suffix palindromes are depicted by the solid arrows.
          Since $a_{i - s' - 1} = a_{7} = a_{9} = a_{i}  = \mathtt{a}$,
          we can obtain a palindrome ending in the $9$th run by extending
          the second longest element belonging to this arithmetic progression (the extension is depicted by the broken arrows).
          This extended palindrome is a MUPS of $S$ ending in the $9$th run.
          Since the extensions of the third and fourth longest elements 
          are both prefixes and suffixes of the above MUPS,
          no MUPSs can be obtained from the third and fourth longest elements belonging to this arithmetic progression.
          Also, since $a_{i - (s' + (t' - 1)d') - 1} = a_{1} = \mathtt{c} \neq  \mathtt{a} = a_{9} = a_{i}$ in this string,
          the longest element belonging to this arithmetic progression cannot be extended.
	}
\label{fig:lemma11-3}
\end{figure}

\begin{figure}[!ht]
  \centerline{\includegraphics[width=0.9\textwidth]{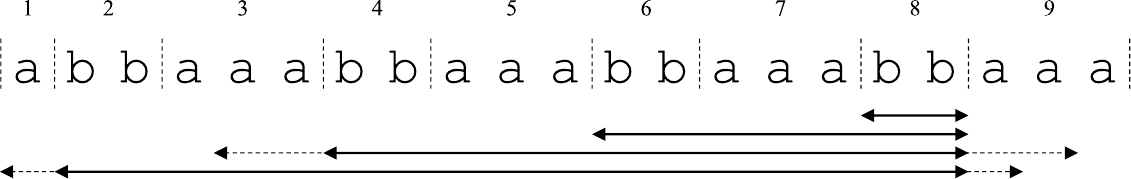}}
  \caption{
          An example where
          both the first case ($a_{i - (s' + (t' - 1)d') - 1} = a_{i}$) and
          the second case ($a_{i - s' - 1} = a_{i}$) in the proof for 
          Lemma~\ref{lem:number_of_mups_in_a_run} apply.
          Let $\rle{S} = \mathtt{a^1b^2a^3b^2a^3b^2a^3b^2a^3}$.
          Let $i = 9$ and consider MUPSs that end in the $9$th run in $\rle{S}$.
          For $i-1 = 8$,
          consider the RLE-bounded suffix palindromes of $\rle{S}[1..8]$
          that can be represented by an arithmetic progression $\langle s', d', t' \rangle = \langle 1, 2, 4\rangle$.
          These RLE-bounded suffix palindromes are depicted by the solid arrows.
          In this case,
          the extensions from the longest one and the second longest one
          belonging to the arithmetic progression are MUPSs
          of $S$ ending in the $9$th run (the extensions are depicted by the broken arrows).
          Note that the extensions from the third longest one and the fourth longest one
          occur at least twice in $S$.
	}
  \label{fig:lemma11-4}
\end{figure}

\subsection{Query algorithm}

Here we present our algorithms for answering queries of Problem~\ref{prob:SUPS_RLE_faster}.
While we inherit the basic concepts of our query algorithm for the original problem (Problem~\ref{prob:SUPS_RLE}),
we here use slightly different data structures.

\subsubsection{Preprocessing}

As was done in Section~\ref{sec:query_algorithm},
we store the set $\M_S$ of MUPSs of $S$ in the three following arrays.
We store $\M_S$ using the three following arrays:
\begin{itemize}
	\item $\mupsbeg[i]$ : the beginning position of the $i$th MUPS in $\M_S$.
	\item $\mupsend[i]$ : the ending position the $i$th MUPS in $\M_S$.
	\item $\mupslen[i]$ : the length of the $i$th MUPS in $\M_S$.
\end{itemize}
Since the number of MUPSs in $\M_S$ is at most $m$ (Corollary~\ref{coro:num-of-mupss}),
the length of each array is at most $m$.

Additionally, we build the two following arrays of size exactly $m$ each,
such that for each $1 \leq i \leq m$:
\begin{itemize}
              \item $\mrb[i]$ stores
                a sorted list of the beginning positions of MUPSs that begin in the $i$th run
                (i.e. in the position interval $[\rlebp{i}..\rleep{i}]$),
                arranged in increasing order.

	      \item $\mre[i]$ stores
                a sorted list of the ending positions of MUPSs that end in the $i$th run
                (i.e. in the position interval $[\rlebp{i}..\rleep{i}]$),
                arranged in increasing order.
\end{itemize}
We can easily precompute $\mrb$ (resp. $\mre$) in $O(m)$ time
by a simple scan over $\mupsbeg$ (resp. $\mupsend$).

Given a query input $(s_r, s_p, t_r, t_p)$ for Problem~\ref{prob:SUPS_RLE_faster},
we can retrieve the corresponding query interval $[s, t]$ over the string $S$
by
\begin{itemize}
	\item $s = \rlebp{s_r} + s_p - 1$,
	\item $t = \rlebp{t_r} + t_p - 1$.
\end{itemize}
Thus, provided that $\rlebp{i}$ are already computed for all $1 \leq i \leq m$,
we can retrieve the query interval $[s, t]$ in $O(1)$ time.
We can easily compute $\rlebp{i}$ for all $1 \leq i \leq m$
in $O(m)$ total time by scanning $\rle{S}$.

\subsubsection{SUPSs queries}

Suppose that we have retrieved the query interval $[s, t]$
from the query input $(s_r, s_p, t_r, t_p)$ as above.
The next task is to compute the number of MUPSs contained in $[s, t]$.

As was discussed previously, the number of MUPSs contained in $[s, t]$
can be computed from $\Succ{\mupsbeg}{s}$ and $\Pred{\mupsend}{t}$.
In our algorithm for Problem~\ref{prob:SUPS_RLE} (Section~\ref{sec:query_algorithm}),
we computed $\Succ{\mupsbeg}{s}$ and $\Pred{\mupsend}{t}$
using the successor/predecessor data structures of Lemma~\ref{lem:pred_succ_data_structure}
built on $\mupsbeg$ and $\mupsend$.
Here, we present two alternative approaches.

Our first solution is the following:
\begin{theorem}
	\label{theorem:problem2-1}
	Given $\rle{S}$ of size $m$ for a string $S$, 
	we can compute a data structure of $O(m)$ space in
	$O(m(\log \rlesigma + \sqrt{\log m / \log\log m}))$ time
	so that subsequent run-length encoded SUPS queries of Problem~\ref{prob:SUPS_RLE_faster}
        can be answered
	in $O(\sqrt{\log\log m / \log\log\log m} + \alpha)$ time,
	where $\rlesigma$ denotes the number of distinct
	RLE-characters in $\rle{S}$ and $\alpha$ the number of SUPSs to report.
\end{theorem}

\begin{proof}
  For each $1 \leq i \leq m$,
  we build the successor data structure of Lemma~\ref{lem:pred_succ_data_structure}
  on the elements stored in $\mrb[i]$.
  Similarly, 
  for each $1 \leq i \leq m$,
  we build the predecessor data structure of Lemma~\ref{lem:pred_succ_data_structure}
  on the elements stored in $\mre[i]$.
  Then, we compute $\Succ{\mupsbeg}{s}$ and $\Pred{\mupsend}{t}$
  from the successor/predecessor data structures for 
  $\Succ{\mrb[s_r]}{s}$ and $\Pred{\mre[t_r]}{t}$, respectively.
  This approach covers the cases where $\Succ{\mupsbeg}{s}$ exists in $\mrb[s_r]$
  and $\Pred{\mupsend}{t}$ exists in $\mre[t_r]$.
  To deal with the case where $\Succ{\mupsbeg}{s}$ does not exist in $\mrb[s_r]$,
  we precompute $\Succ{\mupsbeg}{\rleep{s_r}}$.
  We can precompute $\Succ{\mupsbeg}{\rleep{i}}$ for all $1 \leq i \leq m$
  in $O(m)$ time by a simple scan over $\mupsbeg$.
  The case where $\Pred{\mupsend}{t}$ does not exist in $\mre[t_r]$
  can be treated similarly.

  For each $1 \leq i \leq m$,
  let $c_i$ be the number of MUPSs stored in $\mrb[i]$.
  The successor data structure of Lemma~\ref{lem:pred_succ_data_structure}
  for the list of MUPSs in $\mrb[i]$ occupies $O(c_i)$ space,
  can be built in $O(c_i \sqrt{\log c_i / \log\log c_i})$ time,
  and answers successor queries in $O(\sqrt{\log c_i / \log\log c_i})$ time.
  It follows from Lemma~\ref{lem:number_of_mups_in_a_run} and Corollary~\ref{coro:num-of-mupss} that
  $c_i = O(\log m)$ for each $1 \leq i \leq m$
  and the total number of MUPSs stored in the data structures
  for all $1 \leq i \leq m$ is $\sum_{i=1}^{m}c_i = O(m)$,
  Therefore,
  the successor data structures for $\mrb[i]$ for all $1 \leq i \leq m$
  can be built in a total of $O(m\sqrt{\log m / \log\log m})$ time (due to Jensen's inequality),
  can be stored in $O(m)$ total space,
  and answer successor queries in $O(\sqrt{\log\log m /\log\log\log m})$ time.
  The same argument holds for the predecessor data structures for $\mre[i]$.
  Thus, we can count the number of MUPSs contained in the interval $[s,t]$
  in $O(\sqrt{\log\log m /\log\log\log m})$ time.
  The rest of our query algorithm is the same as in Section~\ref{query_algorithm}.
\end{proof}

Our second solution is simpler and can be built faster than the first solution,
but supports slightly slower queries.
\begin{theorem}
	Given $\rle{S}$ of size $m$ for a string $S$,
	we can compute a data structure of $O(m)$ space in
	$O(m\log \rlesigma)$ time
	so that subsequent run-length encoded SUPS queries of Problem~\ref{prob:SUPS_RLE_faster}
	in $O(\log\log m + \alpha)$ time,
	where $\rlesigma$ denotes the number of distinct
	RLE-characters in $\rle{S}$ and $\alpha$ the number of SUPSs to report.
\end{theorem}

\begin{proof}
  Given $s_r$ and $t_r$,
  we binary search $\mrb[s_r]$ and $\mre[t_r]$ for $\Succ{\mupsbeg}{s}$ and $\Pred{\mupsend}{t}$,
  respectively.
  Since the numbers of elements stored in $\mrb[s_r]$ and in $\mre[t_r]$ are $O(\log m)$ each
  by Lemma~\ref{lem:number_of_mups_in_a_run},
  the binary searches terminate in $O(\log \log m)$ time.
  The cases where $\Succ{\mupsbeg}{s}$ does not exist in $\mrb[s_r]$,
  and $\Pred{\mupsend}{t}$ does not exist in $\mre[t_r]$
  can be treated similarly as in Theorem~\ref{theorem:problem2-1}.
  The rest of our query algorithm follows our method in Section~\ref{query_algorithm}.
  Clearly, this data structure takes $O(m)$ total space.
\end{proof}

\section*{Acknowledgments}
This work was supported by JSPS KAKENHI Grant Numbers JP18K18002 (YN), JP17H01697 (SI), JP16H02783 (HB), JP18H04098 (MT), and by JST PRESTO Grant Number JPMJPR1922 (SI).

\bibliographystyle{abbrv}
\bibliography{ref}

\end{document}